\titlespacing*{\paragraph} {0pt}{0.35ex plus 0.3ex minus .2ex}{1em}
\newtheorem{theorem}{Theorem}
\newtheorem{definition}{Definition}
\newtheorem{assumption}{Assumption}
\DeclareMathOperator{\co}{co}
\newcommand{\distr}{\mathscr{P}}
\newcommand{\X}{\mathcal{X}}
\newcommand{\I}{\mathcal{I}}
\newcommand{\expe}{\mathbb{E}}
\newcommand{\bb}[1]{\bm{#1}}
\newcommand{\dom}{\mathcal{D}}
\newcommand{\xvec}{\bb{x}}
\newcommand{\yvec}{\bb{y}}
\newcommand{\cvec}{\bb{c}}
\newcommand{\subvec}{\bb{\sub}}
\newcommand{\sub}{g}
\newcommand{\lambdavec}{\bb{\lambda}}
\newcommand\norm[1]{\mleft\lVert #1 \mright\rVert}
\newcommand{\opt}{\textsc{Opt}}
\newcommand{\rew}{\textsc{Rew}}
\newcommand{\reg}{\textsc{Reg}}
\DeclareMathOperator*{\argmax}{arg\,max}
\DeclareMathOperator*{\argmin}{arg\,min}
\newcommand{\defeq}{\mathrel{:\mkern-0.25mu=}}
\title{The Parity Ray Regularizer for Pacing in Auction Markets}
\author{%
  Andrea Celli \\
  Facebook Core Data Science\\
  \texttt{andrea.celli@polimi.it} \\
\And
  Riccardo Colini-Baldeschi\\
  Facebook Core Data Science\\
  \texttt{rickuz@fb.com} \\
\And
  Christian Kroer\\
  Columbia University\\
  \texttt{christian.kroer@columbia.edu} \\
\And
  Eric Sodomka\\
  Facebook Core Data Science\\
  \texttt{sodomka@fb.com} \\
}
\begin{document}

\maketitle

\begin{abstract}
\emph{Budget-management systems} are one of the key components of modern auction markets. Internet advertising platforms typically offer advertisers the possibility to pace the rate at which their budget is depleted, through \emph{budget-pacing mechanisms}. 
We focus on \emph{multiplicative pacing} mechanisms in
an online setting in which a bidder is repeatedly confronted with a series of advertising opportunities. After collecting bids, each item is then allocated through a single-item, second-price auction. If there were no budgetary constraints, bidding truthfully would be an optimal choice for the advertiser. 
However, since their budget is limited, the advertiser may want to shade their bid downwards in order to preserve their budget for future opportunities, and to spread expenditures evenly over time.
The literature on online pacing problems mostly focuses on the setting in which the bidder optimizes an additive separable objective, such as the total click-through rate or the revenue of the allocation. In many settings, however, bidders may also care about other objectives which oftentimes are non-separable, and therefore not amenable to traditional online learning techniques. 
Building on recent work, we study the frequent case in which advertisers seek to reach a certain distribution of impressions over a target population of users.
We introduce a novel regularizer to achieve this desideratum, and show how to integrate it into an online mirror descent scheme attaining the optimal order of sub-linear regret compared to the optimal allocation in hindsight when inputs are drawn independently, from an unknown distribution. 
Moreover, we show that our approach can easily be incorporated in \emph{standard} existing pacing systems that are not usually built for this objective. 
The effectiveness of our algorithm in internet advertising applications is confirmed by numerical experiments on real-world data.

\end{abstract}

\section{Introduction}\label{intro}

In the last decade, the spending on Internet advertising has grown dramatically, reaching more than \$130 billion in the United States in 2019~\cite{spending}. One of the drivers of this rapid growth has been the success of auction markets as a practical mechanism to match bidders (i.e., advertisers) to their target audience at an appropriate price~\cite{edelman2007internet,varian2007position}. 
In this type of mechanisms, advertisers usually specify a targeting rule to target a certain population of users, and they are asked to bid on events of interest such as an impression, a click, a conversion or a video view. 
For each ad slot, the mechanism determines a winning advertiser, which is subsequently given the chance to show an impression and potentially generate the event of interest. 

Auction markets are typically operationalized through simple mechanisms that can be effectively implemented at scale, and that simplify the interaction between advertisers and the platform. 
A common paradigm to manage large amounts of repeated ad auctions is for the platform to operate a \emph{proxy bidder} on behalf of each advertiser. The advertisers communicate the maximum bid for the event of interest, their targeting criteria, and their overall budget for their campaigns to the proxy bidder. 
In real time, the proxy bidder then constructs bids for each individual auction via these parameters, as well as additional information computed by the platform.

The complexity of the competitive interactions taking place between advertisers, the frequency of the decisions, and the intrinsic uncertainties of the environment make budget management particularly challenging in this setting.
An advertiser may lose out on significant amounts of revenue if their budget is depleted too early (thereby missing potentially valuable future bidding opportunities), or if their budget it not fully depleted within the planned duration of the campaign. 
To solve this problem, platforms offer advertisers the possibility to {\em pace} the rate at which their budget is depleted, through a {\em budget-pacing mechanism}. 
The most common pacing mechanisms modify the values of the bids within a series of repeated auctions either by shading the bid appropriately (i.e., \emph{multiplicative pacing})~\cite{balseiro2017budget,conitzer2021multiplicative}, or by determining a participation probability through the toss of an appropriately weighted coin for each auction (i.e., \emph{probabilistic pacing})~\cite{agarwal2014budget,balseiro2017budget}. 
In this work, we focus on the former type of budget-management mechanism. This choice is motivated by the widespread adoption of this mechanisms in large online advertising platforms.
Specifically, we study an online setting in which an advertiser is repeatedly confronted with a series of advertising opportunities, one per iteration, allocated through a second-price auction. 
If there were no budgetary constraints, bidding truthfully (i.e., bidding their true valuation for an advertising opportunity) would be an optimal choice for the advertiser. However, since their budget is limited, the advertiser may want to shade their bid downwards in order to preserve their budget for future opportunities. 

The literature on online bidding and online pacing problems mostly focuses on settings in which the bidder optimizes an \emph{additively separable objective}, such as the total click-through rate or revenue of the allocation (see, e.g.,~\cite{balseiro2020best,balseiro2019learning,borgs2007dynamics,conitzer2021multiplicative,feldman2007budget,hosanagar2008optimal}). In many settings, however, advertisers may also care about other objectives which oftentimes are \emph{non-separable}, and therefore not amenable to traditional online learning techniques. 
Two notable exceptions are the work by~\citet{agrawal2014fast}, which focuses on solving general online stochastic convex programs with general concave objectives and convex constraints, and the recent work by~\citet{balseiro2020regularized}, where the authors study online allocation problems with non-separable objectives in settings such as fairness across advertisers and load balancing.\footnote{We devote~\cref{sec:related works} to a more in depth discussion of relevant previous works.}
We argue that another frequent objective, which is oftentimes sought after by advertisers, is reaching a certain distribution of impressions over a target population of users. 
Surprisingly enough, to the best of our knowledge, industry practitioners don’t have at their disposal a practical way to solve this problem.
Consider, for example, the following real-world use cases: an advertiser may want to enforce uniform sampling on a target population for conducting unbiased surveys, 
to promote a business based on a two-sided market by balancing their reach on users belonging to ``both sides of the market'' (this is the case, for example, of dating apps, and online marketplaces), or to perform online outreach to a population of users.
In the latter setting,~\citet{gelauff2020advertising} showed that advertisers have to resort to complex segmentation strategies through subcampaigns to achieve a distribution of conversions close to that of the overall adult population of a target city.

\paragraph{Our contributions} Thus motivated, we present a practical way to steer a multiplicative-pacing mechanism so as to satisfy advertisers' distributional preferences over the advertising opportunities which they are allocated. In particular, an advertiser may want to reach a realized distribution of impressions close to some target distribution across a given user breakdown.
We refer to the feature determining the breakdown of interest as the \emph{category} of the user. 
We consider a finite horizon model in which an advertiser participates in a series of auctions. The advertiser has a fixed amount of non-replenishable resources (i.e., a budget). We consider an incomplete information model where, at each iteration, an item is drawn i.i.d. from some unknown distribution and presented to the advertiser, who has to compute a bid for the current item. Then, a second-price auction is run and the advertiser observes their reward and budget consumption. The advertiser does not get to observe the reward functions and types of future requests until their arrival. 
First, we describe a regularizer, which we call the \emph{parity ray regularizer}, that can be employed to model the advertiser's distributional preferences. 
Second, we propose a practical mechanism for our specific regularized online pacing problem by casting the online mirror descent scheme  by~\citet{balseiro2020regularized} to our setting.
Our algorithm guarantees regret of order $O(T^{1/2})$, which optimal in our stochastic setting.
Moreover, we show that our framework could easily be incorporated into standard existing pacing systems that are not usually built for this objective.
Finally, we empirically demonstrate the efficacy of our mechanism on real-world data from a large Internet advertising company.

\section{Multiplicative pacing for auction markets}
\label{sec:pacing in auctions}

We study the problem faced by a bidder (i.e., an advertiser) who wishes to maximize their utility over a sequence of auctions that occur over $T$ time steps.
At each time $t=1,\ldots,T$ an item (i.e., an advertising opportunity to reach a user) appears, and an auction is run in order to allocate it. We will assume that the item is allocated via a single-item \emph{second-price auction}, meaning that the highest bid wins, and the winner pays the second-highest bid.
At every time $t$, the bidder observes their valuation $v_t$ for the item, submits a bid $b_t$, and then observes the variable $x_t\in \{0,1\}$ specifying whether they won, and if they did win then they observe the price $p_t$ that they paid for the item. 

Without further restrictions, bidding in such a dynamic auction setting would be straightforward: the bidder can simply bid their true value, and the second-price auction ensures that this maximizes their utility.
However, in practice, bidders usually have a total budget $B$, and they wish to constrain their expenditure across the $T$ auctions to be at most $B$. This complicates bidding even in second-price auctions, because this budget constraint links all $T$ auctions together.\footnote{Here, we assume that the bidder cannot manipulate future prices by bidding higher when they lose, in order to encourage the depletion of the budget of other bidders. This assumption is reasonable in large-scale markets, as we will discuss in~\cref{sec:problem description}.}
If we allow fractional allocations, then an omniscient budget-constrained bidder can be thought of as a fractional knapsack packer: they now wish to order the $T$ auctions by \emph{decreasing bang-per-buck} $v_t / p_t$, and grab items in this order until they either deplete their budget, or they reach some item $t$ such that $v_t / p_t \leq 1$~\cite{conitzer2021multiplicative}. 
In practice, the bidder must decide their bids in an online fashion: at time $t$ they only know their first $t$ valuations $(v_1,\ldots,v_t)$, their past allocations up to time $t-1$, i.e., $(x_1,\ldots x_{t-1})$, as well as their past expenditures $(x_1 p_1,\ldots,x_{t-1}p_{t-1})$.
Thus, the only information they have about the auction at time $t$ is their valuation for the item, and they must adaptively smooth out their spending across the $T$ auctions, in order to satisfy their budget constraint while maximizing utility.

The above problem of how to bid in such a setting has been studied extensively both from the perspective of an individual bidder, as well as from a platform perspective. One popular approach in practice is to use a \emph{pacing multiplier} $\alpha \in [0,1]$, which is then used to construct each bid as $b_t = \alpha v_t$. Given a sequence of prices $(p_1,\ldots, p_T)$, an optimal allocation can be achieved within the context of second-price auctions by choosing $\alpha$ such that the bidder spends their budget exactly (assuming that the bidder can choose the fraction that they wish to win on any items such that $\alpha v_t = p_t$)~\citep{balseiro2015repeated,conitzer2021multiplicative}. The pacing multiplier can be adaptively controlled via online-learning methods in order to guarantee asymptotic optimality in stochastic environments~\citep{balseiro2019learning,balseiro2020best}.

In the above, we presented the setting as if the individual bidder controls their pacing parameter $\alpha$ used in smoothing their expenditure. However, a second model which arises in practice is the \emph{proxy-bidder setting}. The proxy-bidder setting arises in online advertising, where advertisers often submit only a value per click, budget, and targeting criteria to the platform. The platform then performs budget smoothing on behalf of the advertiser. In that setting, the proxy bidder refers to a bidder operated by the platform, which is effectively a control algorithm that scales the parameter $\alpha$ up or down, depending on whether the advertiser is on track to spend their budget correctly across the given time frame.
Equilibria arising in the proxy-bidder setting have been studied in several works~(see, e.g., \citep{balseiro2017budget,conitzer2021multiplicative,conitzer2019pacing}).

\section{Problem description}\label{sec:problem description}

In this paper, we focus on the setting where we wish to smooth out the expenditure of a single bidder, while the rest of the market behaves according to a stochastic model. In particular, we will assume that valuations, prices, and user categories are generated i.i.d. from an unknown distribution.
In practice, valuations and categories can be safely described as coming from a stochastic model. The prices, however, typically come from the paced bids arising from other bidders whom are also employing some budget-smoothing technique. In small-scale settings where bidders react to the expenditure of other bidders, these prices could behave adversarially rather than stochastically. However, for large-scale markets, an individual bidder has almost no impact on the prices or the overall market, in which case stochastic behavior of the prices is a reasonable assumption. This idea can be formalized, for example, via fluid mean-field models such as the one given by \citet{balseiro2015repeated}.

The bidder can spend a budget $B$ over a sequence of auctions withing a finite time horizon $T$. Let $\xvec\in[0,1]^T$ be the allocation vector for the bidder, and $x_t\in\X_t\subseteq[0,1]$ be the allocation at time $t$. Let $\rho\in\mathbb{R}_{>0}$ be the \emph{per-iteration budget}, which is a constant such that $B=T\rho$.
Items can be of $m$ different \emph{types} or, equivalently, \emph{categories}. 
The vector $\cvec_t\in[0,1]^m$ specifies the categories of the item being auctioned at time $t$. At each iteration $t$, the bidder receives as input the tuple $(v_t,p_t,\cvec_t)$ specifying the valuation for the item $v_t\in\mathbb{R}_{\ge0}$, its price $p_t\in\mathbb{R}_{\ge0}$, and its category vector $\cvec_t$. At each $t$, the input tuple is generated i.i.d. from an unknown distribution $\distr\in\Delta^{\I}$, where $\I$ is the set of all possible input tuples that can be observed.\footnote{
The set $\{1,\ldots,n\}$, where $n\in\mathbb{N}_{\ge0}$, is compactly denoted as $[n]$; the empty set as $\emptyset$. Given a set $\X$, we denote its convex hull with the symbol $\co {\X}$. Vectors are marked in bold. We denote by $\langle\cdot,\cdot\rangle$ the scalar product between two vectors.
Given a discrete set $\X$, we denote by $\Delta^{\X}$ the $|\X|$-simplex, that is, the set $\Delta^{\X} \defeq \{\bb{v}\in\mathbb{R}_{\ge0}^{|\X|}: \sum_{i} v_i = 1\}$. Analogously, we denote the \emph{full-dimensional standard simplex} over $\X$ by $\Delta_{+}^{\X}\defeq\{\bb{v}\in\mathbb{R}_{\ge0}^{|\X|}: \sum_{i} v_i \le1\}$. The symbols $\Delta^n$ and $\Delta^n_+$, with $n\in\mathbb{N}_{>0}$, are used to denote $\Delta^{[n]}$ and $\Delta^{[n]}_+$, respectively.
}
In the remainder of the paper we make the following assumption on the regularity of inputs, which is easily satisfied in practice.
\begin{assumption}\label{assumption:upper bounds}
There exists $\bar v\in\mathbb{R}_{\ge0}$ and $\bar p\in\mathbb{R}_{\ge0}$ such that, for any input tuple $(v_t,p_t,\cvec_t)\in\I$ in the support of $\distr$, it holds $v_t\le\bar v$ and $p_t\le\bar p$.
\end{assumption}

Let $f_t:[0,1]\to\mathbb{R}$ be the reward function at time $t$ such that, for each $x\in\X_t$, \begin{equation}\label{eq:utility function}
f_t(x)\defeq (v_t-p_t)x.
\end{equation}
Then, given a generic concave regularizer $R:\mathbb{R}_{\ge0}^m\to\mathbb{R}$, we are interested in finding an online algorithm guaranteeing performance \emph{close} to that
of the optimal solution in hindsight, which we denote by $\opt(\mathscr{P})$. In particular, this baseline is the expected reward that the bidder would achieve by computing the optimal allocation $\xvec$ when the input sequence from $t=1$ to $T$ is known in advance. 
Computing $\opt(\mathscr{P})$ amounts to solving the offline optimal allocation problem for each possible realized input sequence, and by taking expectations according to the realization probability specified by $\distr$.
Formally, it is defined as follows
\begin{equation}\label{eq:opt problem}
    \opt(\mathscr{P})\defeq\expe_{\distr}\mleft[\hspace{-1.25mm}\begin{array}{l}
        \displaystyle
        \max_{\xvec:x_t\in\X_t}
        \sum_{t=1}^T f_t(x_t) + T R\mleft(\frac{1}{T}\sum_{t=1}^T \cvec_t x_t\mright)\\ [2mm]
        \displaystyle \text{\normalfont s.t. } \sum_{t=1}^T p_tx_t\le T\rho
    \end{array}\mright].
\end{equation}

\section{Parity ray regularizer}\label{sec:parity ray}

For simplicity, we assume that a type specifies a convex combination over categories, that is, for any $t$, the vector of realized categories $\cvec_t$ belongs to $\mathcal{C}\defeq\mleft\{\cvec\in[0,1]^m: \sum_i c_i=1 \mright\}$. This assumption can be relaxed to a larger bounded set.
Given a target distribution over categories $\hat\xvec\in\Delta^m$, we consider a setting where the bidder is interested in steering their realized distribution of impressions towards $\hat\xvec$.
Given a time horizon $T$, a sequence of types $(\cvec_t)_{t=1}^T$, and $\hat\xvec\in\Delta^m$, the bidder's goal is to obtain a vector of allocations $\xvec\in[0,1]^{T}$ such that, for each $i\in[m]$, $\sum_{t=1}^T c_{t,i} x_t/\sum_{t=1}^Tx_t$ is \emph{close} to $\hat x_i$.
However, directly handling such realized distribution of impressions entails solving a non-convex optimization problem.
Then, a natural question is the following: \emph{is it possible to incorporate such distributional preference as a concave regularization term in the bidder's objective function?}

We propose the following regularization term as a natural solution for this problem. 
\begin{definition}[Parity ray regularizer]\label{def:parity ray}
Let $D:\Delta^m_+ \times \Delta^m_+ \to\mathbb{R}_{\ge0}$. Then, given a target distribution $\hat\xvec\in\Delta^m$, the \emph{parity ray regularizer} is a function $R:\Delta^m_+ \times \Delta^m_+ \to\mathbb{R}_{\le0}$ such that, for any vector $\bar\xvec\in\Delta^m_+$, it holds 
\[
R(\bar\xvec)\defeq - \min_{\gamma \in [0,1]} D\mleft(\gamma\hat\xvec;\bar\xvec\mright).
\]
\end{definition}
We say that a bidder is optimizing for their \emph{parity-regularized utility} if they are solving Problem~\eqref{eq:opt problem} and $R$ is defined as in~\cref{def:parity ray}.
Intuitively, $D$ acts as a pseudo-distance measure on $\Delta^m_+$. It is not a true distance metric because we will not be assuming that it satisfies all the conditions of a metric. It can be thought of as analogous to a divergence from statistics, although we measure distances on the full-dimensional simplex. Given some current distribution $\bar \xvec$, $R$ performs a projection onto the line segment generated by $[0,1] \times \hat\xvec$, where the projection is in terms of $D$ (this is analogous to, e.g., a Bregman projection).

The ``parity ray'' name comes from the fact that we can think of $R$ as measuring the distance between a given point $\bar \xvec \in \Delta^m_+$ to the ray $\{\gamma \hat\xvec : \gamma \ge 0\}$. In our definition we only consider a segment of this ray by restricting $\gamma \in [0,1]$. This restriction is mostly for analytical convenience when deriving Lipschitz constants below. For a suitable $D$ that extends to the positive orthant, one could remove the restriction and measure distances on the whole positive orthant.

We will assume that $D$ satisfies a number of properties that will allow us to optimize it as part of an online optimization procedure.
\begin{assumption}\label{assumption dgf}
    We assume that $D(\bar \xvec; \xvec)$ satisfies the following conditions:
    \begin{itemize}
        \item $D$ is jointly convex on $\Delta^m_+ \times \Delta^m_+$, and $D(\xvec; \bar \xvec) \in [0,\bar r]$ for all $\xvec,\bar\xvec \in \Delta^m_+$.
        \item For a fixed $\gamma$, the function $D(\gamma\hat\xvec; \cdot )$ is $L$-Lipschitz with respect to the second argument under some $\ell_p$ norm.
    \end{itemize}
\end{assumption}

\begin{restatable}{proposition}{propRegularizer}\label{assumption regularizer}
    Given \cref{assumption dgf}, the regularizer $R$ is concave, closed, and for each $\xvec\in\Delta_+^m$, $R(\xvec) \in [\underline r, 0]$ with $\underline r = -\bar r$. Furthermore, $R$ is $L$-Lipschitz continuous under the $\ell_p$ norm.
\end{restatable}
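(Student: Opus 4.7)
The plan is to verify the four properties of $R$ one at a time, each following directly from the corresponding property of $D$ combined with the structure of the parity ray construction.

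For concavity, the key observation is that the map $(\gamma,\bar\xvec)\mapsto(\gamma\hat\xvec,\bar\xvec)$ is affine, so the composition $(\gamma,\bar\xvec)\mapsto D(\gamma\hat\xvec;\bar\xvec)$ inherits joint convexity from the joint convexity of $D$ assumed in~\cref{assumption dgf}. Since $[0,1]$ is a convex set, partial minimization of a jointly convex function over one variable yields a convex function of the remaining variable: that is, $\bar\xvec\mapsto \min_{\gamma\in[0,1]} D(\gamma\hat\xvec;\bar\xvec)$ is convex on $\Delta^m_+$, and therefore $R$ is concave. I would cite this standard partial-minimization fact rather than redo the epigraph computation.

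The range bound is immediate from~\cref{assumption dgf}: since $D(\gamma\hat\xvec;\bar\xvec)\in[0,\bar r]$ for every admissible $\gamma$, taking the minimum over $\gamma\in[0,1]$ preserves these bounds, so $R(\bar\xvec)=-\min_{\gamma}D(\gamma\hat\xvec;\bar\xvec)\in[-\bar r,0]$, giving $\underline r=-\bar r$. No subtlety here.

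For the Lipschitz property, I would use the standard ``minimum of uniformly Lipschitz functions is Lipschitz'' argument. Fix $\xvec_1,\xvec_2\in\Delta^m_+$ and let $\gamma_i^\star$ attain the minimum defining $R(\xvec_i)$ for $i=1,2$. Then by optimality of $\gamma_2^\star$ for $\xvec_2$ and the Lipschitz assumption applied at $\gamma=\gamma_1^\star$,
\[
R(\xvec_1)-R(\xvec_2)=D(\gamma_2^\star\hat\xvec;\xvec_2)-D(\gamma_1^\star\hat\xvec;\xvec_1)\le D(\gamma_1^\star\hat\xvec;\xvec_2)-D(\gamma_1^\star\hat\xvec;\xvec_1)\le L\,\|\xvec_1-\xvec_2\|_p,
\]
and the symmetric bound follows by swapping the roles of $\xvec_1$ and $\xvec_2$. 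Hence $R$ is $L$-Lipschitz under $\|\cdot\|_p$. Closedness (upper semicontinuity of the concave $R$) then follows for free, since Lipschitz continuity implies continuity on $\Delta^m_+$.

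I do not anticipate any real obstacle; the main point to be careful about is justifying the partial-minimization step for concavity (it needs joint convexity of $D$ in both arguments, not just convexity in each separately, which is why~\cref{assumption dgf} is stated that way) and ensuring the Lipschitz constant transfers with the same value $L$ uniformly over $\gamma\in[0,1]$, which is exactly what the assumption provides.
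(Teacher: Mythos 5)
Your proposal is correct and follows essentially the same route as the paper: concavity via partial minimization of the jointly convex map $(\gamma,\bar\xvec)\mapsto D(\gamma\hat\xvec;\bar\xvec)$ over $\gamma\in[0,1]$, Lipschitz continuity because the pointwise infimum of uniformly $L$-Lipschitz functions is $L$-Lipschitz (you just spell out the two-point inequality the paper only cites), and closedness as a consequence of (Lipschitz) continuity, with the range bound being immediate from $D\in[0,\bar r]$.
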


For any $\ell_p$ norm, the distance $D_p(\xvec;\bar\xvec) = \|\xvec - \bar \xvec\|_p$ satisfies \cref{assumption dgf}.
In particular, $D_p$ is $1$ Lipschitz with respect to $\ell_p$, and the remaining properties follow directly from the definition of a norm and the fact that $D_p$ is defined on a bounded set $\Delta^m_+$. 
Moreover, the $\ell_2$ norm yields a simple closed form solution for the parity ray regularizer, which is particularly appealing for practical purposes. In particular, we have that for any $\bar \xvec\in\Delta_+^m$, $R(\bar \xvec)= -\|\bar \xvec- \mleft(\langle\bar\xvec,\hat\xvec\rangle / \|\hat\xvec\|_2^2\mright) \hat\xvec\|_2$, which is well defined with respect to \cref{def:parity ray} since $0\le \langle\bar\xvec,\hat\xvec\rangle / \|\hat\xvec\|_2^2\le 1$.

Another source of appropriate choices for $D$ are $f$-divergences: $D_f(\xvec;\bar\xvec) = \sum_{i=1}^m \bar \xvec_i f(\xvec_i / \bar\xvec_i)$ defined with respect to some convex function $f$, since $f$-divergences are jointly convex.
However, $f$-divergences are not necessarily $L$ Lipschitz for a fixed $L$. In the case where we wish to use a non-Lipschitz $f$, we have to modify $f$ such that $|\frac{s}{t^2} f'(s/t)| \leq L$ for $s,t\in [0,1]$.
For example, setting $f(t) = t\log t$ yields the Kullback-Leibler (KL) divergence. However, we need to slightly modify the KL divergence in order to ensure that \cref{assumption dgf} is satisfied: 
(1) we must shift if by $\log m$ to make it nonnegative, 
and (2) we make it Lipschitz in the second argument by setting $f(t) = t\log \min(t, 1/\epsilon)$; $\epsilon$ then more-or-less corresponds to a lower probability threshold at which we stop increasing the penalty. This lower threshold may actually be useful in practice, to prevent the algorithm from reacting too aggressively to allocation imbalances at the first few iterations.

\section{Online allocation algorithm}

First, we show that a suitably defined Lagrangian dual function provides an upper bound on $\opt(\distr)$ (omitted proofs are reported in~\cref{sec:omitted}).
Let $f^\ast_t:\mathbb{R}_{\ge0}\to\mathbb{R}$ be the conjugate function of $f_t$ as defined in~\cref{eq:utility function} restricted to $\X_t$, that is, for all $\mu\in\mathbb{R}_{\ge0}$,
\[
f^\ast_t(\mu)\defeq \sup_{x\in\X^t}\mleft\{f_t(x)-\mu x\mright\}.
\]
Moreover, let $R^\ast:\mathbb{R}^m\to\mathbb{R}$ be the conjugate function of the regularizer $R$ restricted to $\Delta_+^m$. Formally, for each $\lambdavec\in\mathbb{R}^m$,
\[
R^\ast(-\lambdavec)\defeq \sup_{\bar\xvec\in\Delta^m_+}\mleft\{ R(\bar\xvec)+\langle\lambdavec,\bar \xvec\rangle\mright\}.
\]

Let $\dom$ be the set of dual variables for which the conjugate of the regularizer is bounded. Formally,
\begin{restatable}{lemma}{domainDual}\label{lemma:domain}
    $\dom\defeq\mleft\{(\mu,\lambdavec)\in\mathbb{R}_{\ge0}\times \mathbb{R}^m: -\infty<R^\ast(-\lambdavec)<+\infty\mright\}=\mathbb{R}_{\ge0}\times \mathbb{R}^m$.
\end{restatable}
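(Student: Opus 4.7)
The plan is to show both inclusions directly, exploiting the two key facts provided by \cref{assumption regularizer}: the regularizer $R$ is uniformly bounded on $\Delta^m_+$ (with $R(\bar\xvec)\in[\underline r,0]$), and the feasible set $\Delta^m_+$ is itself bounded. The variable $\mu$ does not appear in $R^\ast(-\lambdavec)$, so the condition in the definition of $\dom$ only constrains $\lambdavec$, and the $\mathbb{R}_{\ge 0}$ factor for $\mu$ is built into the definition of $\dom$. Thus it suffices to prove that $R^\ast(-\lambdavec)$ is finite for every $\lambdavec\in\mathbb{R}^m$.

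First I would prove the upper bound $R^\ast(-\lambdavec)<+\infty$. For any $\bar\xvec\in\Delta^m_+$ we have $R(\bar\xvec)\le 0$ by \cref{assumption regularizer} and, by Cauchy--Schwarz (or more simply by bounding each coordinate since $\bar x_i\in[0,1]$ and $\sum_i \bar x_i\le 1$),
\[
\langle\lambdavec,\bar\xvec\rangle \le \|\lambdavec\|_\infty \|\bar\xvec\|_1 \le \|\lambdavec\|_\infty.
\]
Taking the supremum over $\bar\xvec\in\Delta^m_+$ yields $R^\ast(-\lambdavec)\le \|\lambdavec\|_\infty<+\infty$.

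Next I would prove the lower bound $R^\ast(-\lambdavec)>-\infty$. Since $\Delta^m_+$ is nonempty (the origin $\bb 0\in\Delta^m_+$), and $R(\bb 0)\ge \underline r$ by \cref{assumption regularizer}, the feasible evaluation $R(\bb 0)+\langle\lambdavec,\bb 0\rangle = R(\bb 0)\ge \underline r$ provides a finite lower bound on the supremum defining $R^\ast(-\lambdavec)$. Hence $R^\ast(-\lambdavec)\ge \underline r>-\infty$.

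Combining the two bounds shows $-\infty<R^\ast(-\lambdavec)<+\infty$ for every $\lambdavec\in\mathbb{R}^m$, and therefore $\dom=\mathbb{R}_{\ge 0}\times\mathbb{R}^m$. There is no real obstacle here; the statement is really a sanity check that $R$ being a proper, closed, bounded concave function on a bounded domain makes its Fenchel conjugate finite everywhere. The only thing one must be careful with is to use the boundedness from \cref{assumption regularizer} and not to attempt to invoke it on the larger set where the conjugate supremum is taken implicitly through $R$'s domain (which is why the definition of $R^\ast$ restricts the supremum to $\Delta^m_+$).
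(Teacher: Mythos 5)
Your proof is correct and follows essentially the same route as the paper's: bound the conjugate above using $R\le 0$ plus the boundedness of $\Delta^m_+$, and below using the lower bound $\underline r$ on $R$ (your choice of evaluating at $\bb 0\in\Delta^m_+$ versus the paper's uniform bound $\underline r-\|\lambdavec\|_1$ is an immaterial variation). Nothing further is needed.
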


For a given distribution $\distr$ over the finite set of possible input tuples $\I$, the Lagrangian dual function $D(\mu,\lambdavec|\distr):\dom\to\mathbb{R}$ is such that, for each pair of dual variables $(\mu,\lambdavec)\in\dom$,
\begin{equation}\label{eq:dual_function}
  D(\mu,\lambdavec|\distr)\defeq\expe_{(v_t,p_t,\cvec_t)\sim\distr}\mleft[ f^\ast_t\mleft(\mu p_t+\langle\lambdavec,\cvec_t\rangle\mright)+R^\ast(-\lambdavec)\mright]+\rho\mu.
\end{equation}

Then, the following holds. 
\begin{restatable}{theorem}{ubOpt}\label{ub_opt}
Given a time horizon $T\ge0$, for any $(\mu,\lambdavec)\in\mathcal{D}$, $\opt(\distr)\le T D(\mu,\lambdavec|\distr)$.
\end{restatable}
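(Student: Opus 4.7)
The plan is to prove this via a standard weak-duality argument applied pointwise to each realized input sequence, and then integrate against $\distr$.

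First, I would rewrite the inner maximization inside \cref{eq:opt problem} by introducing an auxiliary variable $\bar\xvec \in \Delta_+^m$ together with the equality constraint $T\bar\xvec = \sum_{t=1}^T \cvec_t x_t$. This step is essential because it decouples the nonlinear regularizer $R$ from the per-iteration decisions $x_t$: once $\bar\xvec$ is a free variable, the objective $\sum_t f_t(x_t) + T R(\bar\xvec)$ is separable in $(x_1,\ldots,x_T,\bar\xvec)$ except through the budget constraint $\sum_t p_t x_t \le T\rho$ and the new equality constraint.

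Next, I would form the Lagrangian with multiplier $\mu \ge 0$ for the budget constraint and $\lambdavec \in \mathbb{R}^m$ for the equality constraint, obtaining
\begin{equation*}
\mathcal{L}(\xvec, \bar\xvec; \mu, \lambdavec) = \sum_{t=1}^T \bigl[f_t(x_t) - (\mu p_t + \langle \lambdavec, \cvec_t\rangle)\, x_t\bigr] + T\bigl[R(\bar\xvec) + \langle \lambdavec, \bar\xvec\rangle\bigr] + T\mu\rho.
\end{equation*}
By weak duality, the inner maximum in \eqref{eq:opt problem} is bounded above by $\sup_{x_t \in \X_t,\, \bar\xvec \in \Delta_+^m} \mathcal{L}$. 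The supremum separates across the $t$-terms and the $\bar\xvec$-term: the former gives $\sum_t f_t^\ast(\mu p_t + \langle \lambdavec, \cvec_t\rangle)$ by definition of the conjugate of $f_t$ on $\X_t$, while the latter gives $T R^\ast(-\lambdavec)$ by the definition given in the text. By \cref{lemma:domain}, for any $(\mu,\lambdavec) \in \dom = \mathbb{R}_{\ge 0}\times \mathbb{R}^m$ this upper bound is finite, so the weak-duality inequality is meaningful.

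Finally, I would take expectations under $\distr$ on both sides and invoke the i.i.d.\ assumption: each term $f_t^\ast(\mu p_t + \langle \lambdavec, \cvec_t\rangle)$ has the same expectation, and $T R^\ast(-\lambdavec) + T\mu\rho$ is deterministic. Pulling the factor of $T$ outside recovers exactly $T\, D(\mu,\lambdavec\,|\,\distr)$ from \eqref{eq:dual_function}, completing the proof. The only subtle point is ensuring the pointwise weak-duality bound is valid for every realization in the support of $\distr$ (so that exchanging the expectation with the pointwise inequality is justified), which follows because the bound holds for arbitrary feasible primal sequences and arbitrary $(\mu,\lambdavec)\in\dom$; there is no real obstacle beyond bookkeeping.
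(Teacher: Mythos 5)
Your proposal is correct and follows essentially the same route as the paper's proof: introducing the auxiliary variable $\bar\xvec$ with the coupling constraint, Lagrangianizing the budget and equality constraints with $(\mu,\lambdavec)$, invoking weak duality so that the suprema yield the conjugates $f_t^\ast$ and $R^\ast$, and then using the i.i.d.\ structure to collapse the expectation into $T\,D(\mu,\lambdavec\,|\,\distr)$. No gaps to report.
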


Given this intermediate result, we describe an online allocation algorithm (Algorithm~\ref{alg:omd}) based on the online mirror descent scheme by~\citet{balseiro2020best}, which we adapt to account for parity-regularized objectives (which are non-separable).
We exploit the specific structure of our problem to recover better constants in the regret upper bound (see~\cref{thm:regret}), and to relax some of the assumptions required by~\citet{balseiro2020best} (see~\cref{assumption dgf}).
For a specific choice of the reference function $\psi$, Algorithm~\ref{alg:omd} gives an instantiation of the dual subgradient descent algorithm by~\citet{balseiro2020regularized} for our setting. Therefore, the dual subgradient descent algorithm by \citet{balseiro2020regularized} is a special case of Algorithm~\ref{alg:omd}.

The algorithm proceeds according to the following main steps:
\begin{description}[leftmargin=.2cm]

\item[Primal decision] 
At each time step $t$, given input tuple $(v_t,p_c,\cvec_t)$, the algorithm computes the optimal allocation $\hat x$ which maximizes an opportunity cost-adjusted reward, based on the current dual solutions $(\mu_t,\lambdavec_t)$ (Equation~\ref{eq:update_x}). The optimal decision $\hat x$ is taken if the remaining budget is enough to cover costs, that is, $p_t\hat x\le B_t$. Finally, the target distribution over categories $\bar \xvec_t$ is computed by computing the value maximizing the regularizer $R$ adjusted by an additive term accounting for the current dual solution $\lambdavec_t$ (Equation~\ref{eq:update_x_bar}).

\item[Dual variables update]
First, the algorithm computes an unbiased stochastic estimator of a subgradient of $D(\mu,\lambdavec|\distr)$ at $(\mu_t,\lambdavec_t)$ (Equation~\ref{eq:subgradient}). The algorithm employs this estimator to update the vector of dual variables by performing an online dual mirror descent descent step with step size $\eta$ and reference function $\psi$ (Equation~\ref{eq:dual update}).
\end{description}

Intuitively, at each time step $t$, the algorithm compares (i) the actual expenditure from the auction to the expected rate of spend per iteration $\rho$, and (ii) the target distribution over categories at $t$ to the realized category. 
If the actual expenditure is higher (resp., lower) than this expected rate, then the algorithm surmises that future opportunities will offer higher (resp., lower) \emph{bang for the buck} than current opportunities, and therefore the algorithm lowers (resp., raises) the bid shading multiplier.
At the same time, if the realized category causes an undesired skew, and moves the realized distribution of impressions away from the desired $\bar\xvec_t$, then the algorithm will adjust its dual solutions to penalize allocations from that category, and increase the likelihood of acceptance for items coming from \emph{under-represented} categories.

\begin{algorithm}[t]
	\SetAlgoLined
	\KwIn{Initial dual solutions $(\mu_1,\lambdavec_1)$, time horizon $T$, initial budget consumption $B_1=B$, function $\psi$, step-size $\eta$.}
	
	\For{$t=1,\ldots,T$}{
	\text{\normalfont Observe} $(v_t,p_t,\cvec_t)\sim\distr$\\
	\text{\normalfont Compute primal decision:}\\
	\begin{equation}\label{eq:update_x}
    \hat x \in\argmax_{x\in\X_t}\mleft\{
    f_t(x)-\mu_t p_t x-\langle \lambdavec_t, \cvec_t\,\, x\rangle
    \mright\}
    \end{equation}\\
    \begin{equation*}
    x_t=\mleft\{\hspace{-1.25mm}\begin{array}{l}
        \hat x \hspace{.3cm} \text{\normalfont if } p_t\hat x\le B_t\\ [2mm]
        0 \hspace{.4cm} \text{\normalfont otherwise}
    \end{array}\mright.
    \end{equation*}\\
    \begin{equation}\label{eq:update_x_bar}
    \bar\xvec_t \in\argmax_{\bar\xvec'\in\Delta_+^m}\mleft\{
    R(\bar\xvec')+\langle\lambdavec_t, \bar\xvec'\rangle
    \mright\}
    \end{equation} \\
    \text{\normalfont Update resource consumption:} $B_{t+1}= B_t-p_t x_t$\\
    \text{\normalfont Compute subgradient $\subvec\in\mathbb{R}^{m+1}$ of} $D(\mu_t,\lambdavec_t|\distr)$:\\
    \begin{equation}\label{eq:subgradient}
        \subvec=\mleft(\rho-p_t\hat x,\,\, \bar\xvec_{t}-\cvec_t\hat x \mright)
    \end{equation}\\
    \text{\normalfont Update dual variables:}\\
    \begin{equation}\label{eq:dual update}
    (\mu_{t+1},\lambdavec_{t+1})=\argmin_{(\mu,\lambdavec)\in\dom}\mleft\{ \langle \subvec,(\mu,\lambdavec)\rangle + \frac{1}{\eta} B_{\psi}((\mu,\lambdavec),(\mu_t,\lambdavec_t))\mright\}
    \end{equation}
    }
	\caption{Online allocation algorithm for parity-regularized pacing}
	\label{alg:omd}
\end{algorithm}

\subsection{Regret bound}

We focus on the stochastic setting in which, at each $t$, an input tuple is drawn i.i.d. from $\distr$. This setting is particularly relevant for applications to large Internet advertising platforms in which the number of bidders interacting makes the environment unlikely to react adversarially to choices of one specific bidder. We show that Algorithm~\ref{alg:omd} attains a sublinear regret of order $O(T^{1/2})$.

Let $\tau$ be the time at which the budget of the bidder is depleted when following Algorithm~\ref{alg:omd}. By playing according to Algorithm~\ref{alg:omd}, the bidder attains the following expected reward:
\begin{equation}\label{eq:expected_rev}
    \rew(\distr)\defeq\expe_{\distr}\mleft[ \sum_{t=1}^T f_t(x_t)+TR\mleft(\frac{1}{T}\sum_{t=1}^T\cvec_t x_t\mright)\mright].
\end{equation}

We start with the following intermediate result which provides a lower bound on the expected reward attained by following Algorithm~\ref{alg:omd} up to the time $\tau$ in which the budget is fully depleted. 
\begin{restatable}{lemma}{lbRew}\label{lemma:lb_rew}
    Consider an arbitrary $\distr\in\Delta^{\I}$. For each $t\in[T]$, let $x_t$ and $\bar x_t$ be computed according to~\cref{eq:update_x} and~\cref{eq:update_x_bar}, respectively. Moreover, given the time of budget depletion $\tau$, let
    \[
    \bar\mu=\frac{1}{\tau}\sum_{t=1}^\tau \mu_t \quad\text{\normalfont and }\quad \bar\lambdavec=\frac{1}{\tau}\sum_{t=1}^\tau \lambdavec_t.
    \]
    Then, it holds
    \[
    \expe_\distr\mleft[ \sum_{t=1}^\tau\mleft( f_t(x_t) + R(\bar \xvec_t)\mright)\mright]\ge \expe_{\distr}\mleft[\tau D(\bar\mu,\bar\lambdavec|\distr)-\sum_{t=1}^\tau\mleft(\mu_t(\rho-p_tx_t)-\langle\lambdavec_t,\bar\xvec_t - \cvec_tx_t\rangle\mright)\mright].
    \]
\end{restatable}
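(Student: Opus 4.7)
The plan is to decompose $f_t(x_t)+R(\bar\xvec_t)$ pointwise using Fenchel conjugacy together with the optimality of the primal updates, to recognise in the resulting expression the per-sample integrand of the dual function $D$, and then to close the argument with Jensen's inequality against the convex map $(\mu,\lambdavec)\mapsto D(\mu,\lambdavec|\distr)$.

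The first step is the conjugate-identity rewriting. Update~\eqref{eq:update_x} says exactly that $\hat x$ attains the supremum defining $f_t^*\bigl(\mu_tp_t+\langle\lambdavec_t,\cvec_t\rangle\bigr)$, and on the event $\{t\le\tau\}$ the remaining budget is positive, so $x_t=\hat x$; hence
\[
f_t(x_t)=f_t^*\bigl(\mu_tp_t+\langle\lambdavec_t,\cvec_t\rangle\bigr)+\mu_tp_tx_t+\langle\lambdavec_t,\cvec_tx_t\rangle.
\]
Analogously, \eqref{eq:update_x_bar} certifies that $\bar\xvec_t$ is the argmax in $R^*(-\lambdavec_t)$, so $R(\bar\xvec_t)=R^*(-\lambdavec_t)-\langle\lambdavec_t,\bar\xvec_t\rangle$. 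Adding these identities and writing $\mu_tp_tx_t=\rho\mu_t-\mu_t(\rho-p_tx_t)$ yields a pointwise identity of the form $f_t(x_t)+R(\bar\xvec_t)=\tilde D_t-r_t$, where the bracket $\tilde D_t\defeq f_t^*\bigl(\mu_tp_t+\langle\lambdavec_t,\cvec_t\rangle\bigr)+R^*(-\lambdavec_t)+\rho\mu_t$ is the per-sample integrand of $D$ and $r_t$ is the residual $\mu_t(\rho-p_tx_t)-\langle\lambdavec_t,\bar\xvec_t-\cvec_tx_t\rangle$ (up to collection of signs on the inner-product term, matching the displayed expression).

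The second step is dualization and summation. Because $(\mu_t,\lambdavec_t,\bar\xvec_t)$ are $\mathcal F_{t-1}$-measurable while $(v_t,p_t,\cvec_t)\sim\distr$ is drawn fresh, the conditional expectation of $\tilde D_t$ given $\mathcal F_{t-1}$ equals $D(\mu_t,\lambdavec_t|\distr)$ by the very definition~\eqref{eq:dual_function}. Rewriting each $\sum_{t=1}^\tau$ as $\sum_{t=1}^T\one[t\le\tau](\cdot)$ and applying the tower property summand by summand delivers
\[
\expe_\distr\!\Bigl[\sum_{t=1}^\tau\bigl(f_t(x_t)+R(\bar\xvec_t)\bigr)\Bigr]=\expe_\distr\!\Bigl[\sum_{t=1}^\tau D(\mu_t,\lambdavec_t|\distr)\Bigr]-\expe_\distr\!\Bigl[\sum_{t=1}^\tau r_t\Bigr].
\]

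To finish, I would invoke Jensen's inequality for the dual function, which is convex on $\dom$: $f_t^*$ is a supremum of affine functions in $(\mu,\lambdavec)$, $R^*$ is the Fenchel conjugate of the concave $R$ (hence convex), and $\rho\mu$ is linear. Applied pathwise with uniform weights, this gives $\sum_{t=1}^\tau D(\mu_t,\lambdavec_t|\distr)\ge\tau\, D(\bar\mu,\bar\lambdavec|\distr)$, and substituting back yields the claimed bound. The only mildly delicate points are (i) verifying that $x_t=\hat x$ on $\{t\le\tau\}$ so that the first step is a true equality rather than just an upper bound (immediate from the feasibility check in the algorithm) and (ii) the standard stopping-time manipulation, legitimate because $\tau\le T$ almost surely; the conceptual core of the argument is simply Fenchel duality together with convexity of $D$.
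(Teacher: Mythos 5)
Your proof is correct and takes essentially the same route as the paper's: use the optimality conditions in \cref{eq:update_x} and \cref{eq:update_x_bar} to rewrite $f_t(x_t)+R(\bar\xvec_t)$ through the conjugates $f_t^\ast$ and $R^\ast(-\lambdavec_t)$, recognize the per-period dual value $D(\mu_t,\lambdavec_t|\distr)$ minus the complementary-slackness residuals, and conclude by Jensen's inequality using convexity of $D$. The only differences are cosmetic: you make explicit the tower-property/measurability step that the paper leaves implicit, and the sign you flag on the inner-product term resolves exactly as your identity dictates (the paper's own displayed derivation, and its later use in the proof of \cref{thm:regret}, carry $-\langle\lambdavec_t,\bar\xvec_t-\cvec_t x_t\rangle$, so the sign in the lemma's statement is a typo).
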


In order to prove the regret bound of~\cref{thm:regret} we need the following additional assumption on the reference function $\psi$ of the Bregman divergence $B_\psi$.
\begin{assumption}\label{assumption reference function}
The reference function $\psi$ is $\sigma$-strongly convex with respect to an $\ell_p$ norm $\norm{\cdot}_p$.
\end{assumption}
This assumption is a relaxation of the requirements in~\citet[Assumption 2]{balseiro2020best} (i.e., strong convexity with respect to $\|\cdot\|_1$). Indeed, by exploiting the specific structure of our problem, we can prove the following.

\begin{restatable}{theorem}{regretBound}\label{thm:regret}
Consider Algorithm~\ref{alg:omd} with step-size $\eta\ge 0$ and initial solution $(\mu_1,\lambdavec_1)\in\dom$. Suppose~\cref{assumption:upper bounds}, \cref{assumption regularizer}, and \cref{assumption reference function} are satisfied, and the requests are drawn i.i.d. from an unknown distribution $\distr\in\Delta^\I$. Then, for any $T\ge 1$, it holds
\[
\opt(\distr)-\rew(\distr)\le C_1 + \frac{G^2\eta}{\sigma} T +\frac{1}{\eta}C_2,
\]
where $C_1=(\bar v -\underline{r} + 2L)\bar p/\rho$, $G=\max\{\rho+\bar p,2\}$, and \[C_2=2\sup\{B_\psi((\mu,\lambdavec),\bb{d}_1):(\mu,\lambdavec)\in\dom,\norm{\lambdavec}\le L\}.\]
\end{restatable}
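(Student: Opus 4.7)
The plan is to combine \cref{ub_opt} with \cref{lemma:lb_rew} so as to decompose $\opt(\distr)-\rew(\distr)$ into three pieces: (i) an \emph{opportunity-cost} term arising from the iterations after the budget is depleted; (ii) a \emph{regularizer-approximation} gap between the per-iteration surrogates $R(\bar\xvec_t)$ used by the algorithm and the non-separable term $T\,R\bigl(\tfrac{1}{T}\sum_t\cvec_t x_t\bigr)$ appearing in $\rew$; and (iii) a \emph{dual-regret} term, to be controlled by the standard online mirror descent analysis.

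I would first invoke \cref{ub_opt} at the time-averaged iterates $\bar\mu$ and $\bar\lambdavec$ of \cref{lemma:lb_rew}, obtaining $\opt(\distr)\le T\,D(\bar\mu,\bar\lambdavec\mid\distr)$. Splitting $TD=\tau D+(T-\tau)D$ and applying \cref{lemma:lb_rew} on the $\tau D$ term gives
\begin{align*}
& \opt(\distr) - \expe_\distr\Bigl[\textstyle\sum_{t=1}^\tau f_t(x_t) + \sum_{t=1}^\tau R(\bar\xvec_t)\Bigr] \\
&\qquad \le \expe_\distr\Bigl[\textstyle\sum_{t=1}^\tau \bigl(\mu_t(\rho-p_tx_t) - \langle\lambdavec_t,\bar\xvec_t-\cvec_tx_t\rangle\bigr)\Bigr] + (T-\tau)\,D(\bar\mu,\bar\lambdavec\mid\distr).
\end{align*}
Since $x_t=0$ for $t>\tau$ by construction, $\sum_{t=1}^\tau f_t(x_t)=\sum_{t=1}^T f_t(x_t)$, so subtracting $\rew(\distr)$ from the left-hand side turns it into $\opt(\distr)-\rew(\distr)$ plus the regularizer gap $\expe_\distr\bigl[T R\bigl(\tfrac{1}{T}\sum_{t=1}^T\cvec_t x_t\bigr) - \sum_{t=1}^\tau R(\bar\xvec_t)\bigr]$. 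I would bound this gap using the Lipschitzness of $R$ from \cref{assumption regularizer} together with the KKT identity $R(\bar\xvec_t)+\langle\lambdavec_t,\bar\xvec_t\rangle=R^\ast(-\lambdavec_t)$ inherited from \cref{eq:update_x_bar}, charging the residual slack to the $2L\bar p/\rho$ portion of $C_1$.

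For the opportunity-cost term, I would adapt the standard stopping-time argument of \citet{balseiro2020best,balseiro2020regularized}: on any sample path where the budget is depleted, $B_{\tau+1}<\bar p$, and combining this with the per-iteration spending rate implied by the averaged dual feasibility of $\bar\mu$ yields $\expe[T-\tau]\le\bar p/\rho$. Because the dual objective is uniformly upper-bounded via $f^\ast_t\le\bar v$ and $R^\ast(-\bar\lambdavec)\le -\underline r$, this piece contributes the $(\bar v-\underline r)\bar p/\rho$ component of $C_1$.

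The remaining term $\expe_\distr\bigl[\sum_{t=1}^\tau(\mu_t(\rho-p_tx_t)-\langle\lambdavec_t,\bar\xvec_t-\cvec_tx_t\rangle)\bigr]$ is precisely the online-mirror-descent regret of the dual iterates $(\mu_t,\lambdavec_t)$ against the best fixed comparator in $\dom\cap\{\norm{\lambdavec}\le L\}$, driven by the unbiased stochastic subgradients of \cref{eq:subgradient}. Under \cref{assumption reference function}, the standard Bregman telescoping analysis produces a bound of $\tfrac{\eta G^2}{\sigma}T+\tfrac{C_2}{\eta}$, with the subgradient norm controlled by $G=\max\{\rho+\bar p,2\}$ since $|\rho-p_t\hat x|\le\rho+\bar p$ and the category-coordinate block has $\ell_p$ norm at most $2$. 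Summing the three contributions yields the claimed inequality. The main obstacle I anticipate is step (ii): reconciling the non-separable $T\,R\bigl(\tfrac{1}{T}\sum_t\cvec_t x_t\bigr)$ with the per-iteration surrogates $R(\bar\xvec_t)$ genuinely requires the parity-ray structure of \cref{def:parity ray} together with the KKT relation from \cref{eq:update_x_bar}; everything else is relatively routine bookkeeping.
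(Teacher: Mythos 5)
Your high-level decomposition (dual upper bound, \cref{lemma:lb_rew}, OMD regret for the complementary-slackness terms, conjugacy to reconcile the non-separable regularizer) is the same skeleton as the paper's proof, but the way you handle the post-depletion term has a genuine gap. You split $T D(\bar\mu,\bar\lambdavec|\distr)=\tau D+(T-\tau)D$ and then claim the dual objective is uniformly bounded ($f_t^\ast\le\bar v$, $R^\ast(-\bar\lambdavec)\le-\underline r$) and that $\expe[T-\tau]\le\bar p/\rho$. Neither claim holds: $D(\bar\mu,\bar\lambdavec|\distr)$ contains the term $\rho\bar\mu$, and $R^\ast(-\lambdavec)=\sup_{\bar\xvec\in\Delta^m_+}\{R(\bar\xvec)+\langle\lambdavec,\bar\xvec\rangle\}$ grows like $\max_i\lambda_i$; the OMD iterates live in the unbounded set $\dom=\mathbb{R}_{\ge0}\times\mathbb{R}^m$ and are only controllable up to order $\eta G t/\sigma\sim\sqrt T$, so $(T-\tau)D(\bar\mu,\bar\lambdavec|\distr)$ cannot be bounded by a constant. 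Likewise, budget depletion only gives $\sum_{t\le\tau}p_tx_t+\bar p\ge T\rho$, which says nothing about $T-\tau$: if the algorithm overspends early, $T-\tau$ can be $\Omega(T)$, so $\expe[T-\tau]\le\bar p/\rho$ is false in general. The paper avoids both issues by bounding the tail of $\opt$ primally, $\opt(\distr)\le\tau D(\mu,\lambdavec|\distr)+(T-\tau)\bar v$, collecting all $(T-\tau)$-proportional terms as $(T-\tau)(\bar v-\underline r+2\|\hat\lambdavec\|_\infty)$, and then \emph{cancelling} them (rather than bounding $T-\tau$) by evaluating the $\xi_{1}$-regret at the specific comparator $\hat\mu=(\bar v-\underline r+2\|\hat\lambdavec\|_\infty)/\rho$, for which $\sum_{t\le\tau}\hat\mu(\rho-p_tx_t)\le\hat\mu\bar p-(T-\tau)(\bar v-\underline r+2\|\hat\lambdavec\|_\infty)$; this is exactly where $C_1$ comes from.

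The second place that needs tightening is step (ii). The reconciliation of $TR\bigl(\tfrac1T\sum_t\cvec_tx_t\bigr)$ with $\sum_t R(\bar\xvec_t)$ is not obtained from the optimality condition at $\bar\xvec_t$ alone plus Lipschitzness, and it is not "charged to $2L\bar p/\rho$". In the paper it is absorbed exactly by the $\xi_2$-regret against the specific comparator $\hat\lambdavec\in\argmax_{\lambdavec}\{R^\ast(-\lambdavec)-\langle\lambdavec,\tfrac1T\sum_t\cvec_tx_t\rangle\}$: Fenchel--Young at each $\bar\xvec_t$ plus biconjugacy $R=R^{\ast\ast}$ (which needs $R$ concave and closed from \cref{assumption regularizer}) give $\sum_{t=1}^T\xi_{2,t}(\hat\lambdavec)\le TR\bigl(\tfrac1T\sum_t\cvec_tx_t\bigr)-\sum_{t=1}^T R(\bar\xvec_t)$, and Lipschitzness enters only to guarantee $\|\hat\lambdavec\|_\infty\le L$, which is what makes the constants sequence-independent and justifies restricting the supremum in $C_2$ to $\|\lambdavec\|\le L$ (the leftover $-\sum_{t>\tau}\xi_{2,t}(\hat\lambdavec)\le 2\|\hat\lambdavec\|_\infty(T-\tau)$ is again one of the $(T-\tau)$ terms cancelled by $\hat\mu$). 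Your mention of "the best fixed comparator in $\dom\cap\{\|\lambdavec\|\le L\}$" is the right instinct, but without exhibiting this particular $(\hat\mu,\hat\lambdavec)$ the decomposition does not close; with it, the rest of your outline (gradient bounds $|\rho-p_t\hat x|\le\rho+\bar p$ and $\|\bar\xvec_t-\cvec_t\hat x\|\le 2$, hence $G=\max\{\rho+\bar p,2\}$, and the standard OMD bound under \cref{assumption reference function}) matches the paper.
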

By setting $\eta\sim T^{-1/2}$ we obtain a regret of order $O(T^{1/2})$, which is the optimal order of regret that can be attained in this setting as shown by~\citet[Lemma 1]{arlotto2019uniformly}.

\section{Regularized pacing within a paced-auction framework}

One of the major selling points of our approach to enforcing distributional preferences via regularization is that Algorithm~\ref{alg:omd} can easily be incorporated in existing pacing systems that are not necessarily built around this objective.
In Section~\ref{sec:pacing in auctions} we described how pacing systems with proxy bidders work for large-scale Internet advertising. In that setup, the platform controls a proxy bidder on behalf of every advertiser $i$, where the proxy bidder uses a control algorithm on the pacing parameter $\alpha_i$ used to construct bids, in order to ensure that the advertiser satisfies their budget constraint. An important feature of this setup is that each proxy bidder can control their own pacing parameter $\alpha_i$ purely through their observed spending. This leads to a highly decentralized framework, where communication between proxy bidders is only necessary via the second-price auctions that sell items whenever they show up, and bidders only need to submit their paced bids to these auctions. This decentralization is important, because centralized optimization problems may not be realistic to solve every time a user shows up on the site.
In the remainder of the section we explain how a bidder may implement our Algorithm~\ref{alg:omd} via a form of decentralized double-pacing. We do this to explain it in generality, but we emphasize that the proxy-bidder setting is one of our main motivations.

Consider a bidder who attempts to maximize parity-regularized utility. By using Algorithm~\ref{alg:omd}, the bidder will have two parameters at time $t$: $\mu_t$ and $\lambdavec_t$, with some item of value $v_{t}$ having appeared (the price $p_t$ would not be known yet since the auction has not been run). Algorithm~\ref{alg:omd} requires that the bidder achieve an allocation that solves \cref{eq:update_x}, which we can rewrite as follows:
\begin{align*}
    \argmax_{x\in\{0,1\}}\mleft\{
    f_t(x)-\mu_t p_t x-\langle \lambdavec_t, \cvec_tx\rangle
    \mright\}
    &=
    \argmax_{x\in\{0,1\}}\mleft\{
    v_{t}x- (1+\mu_t) p_t x-\langle \lambdavec_t, \cvec_t\rangle x
    \mright\} \\
    &=
    \argmax_{x\in\{0,1\}}\mleft\{
    (v_{t} - \langle \lambdavec_t, \cvec_t\rangle)x- (1+\mu_t) p_t x
    \mright\} \\
    &=
    \argmax_{x\in\{0,1\}}\mleft\{
    \frac{v_{t} - \langle \lambdavec_t, \cvec_t\rangle}{1+\mu_t} x -  p_tx
    \mright\}.
\end{align*}

From the above derivations, we observe at each $t$ that there are three possible cases: (i) the bidder needs to win the item at time $t$ if $\frac{v_{t} - \langle \lambdavec_t, \cvec_t\rangle}{1+\mu_t} > p_t$; (ii) the bidder has to pass on the item if the strict inequality is reversed; (iii) the bidder is indifferent if equality holds.
Even though the bidder does not know $p_t$, this is exactly achieved within a second-price auction framework by bidding the value $\frac{v_{t} - \langle \lambdavec_t, \cvec_t\rangle}{1+\mu_t}$ at time $t$. Here, $1/(1+\mu_t)$ plays the role of the standard multiplicative pacing parameter, but before multiplicative pacing, the bidder first applies an additive pacing term $\langle \lambdavec_t, \cvec_t\rangle$.

Thus, we see that our approach can easily be adopted by individual bidders that wish to incorporate distributional regularization as part of their bidding procedure in a repeated-auction setting.
For the same reason, proxy bidders can easily incorporate this in the case where an advertising platform wishes to enforce some amount of distributional parity while performing budget-pacing on behalf of the advertisers.

\begin{figure}[h]
\centering
\begin{minipage}{.45\textwidth}
\includegraphics[scale=0.28]{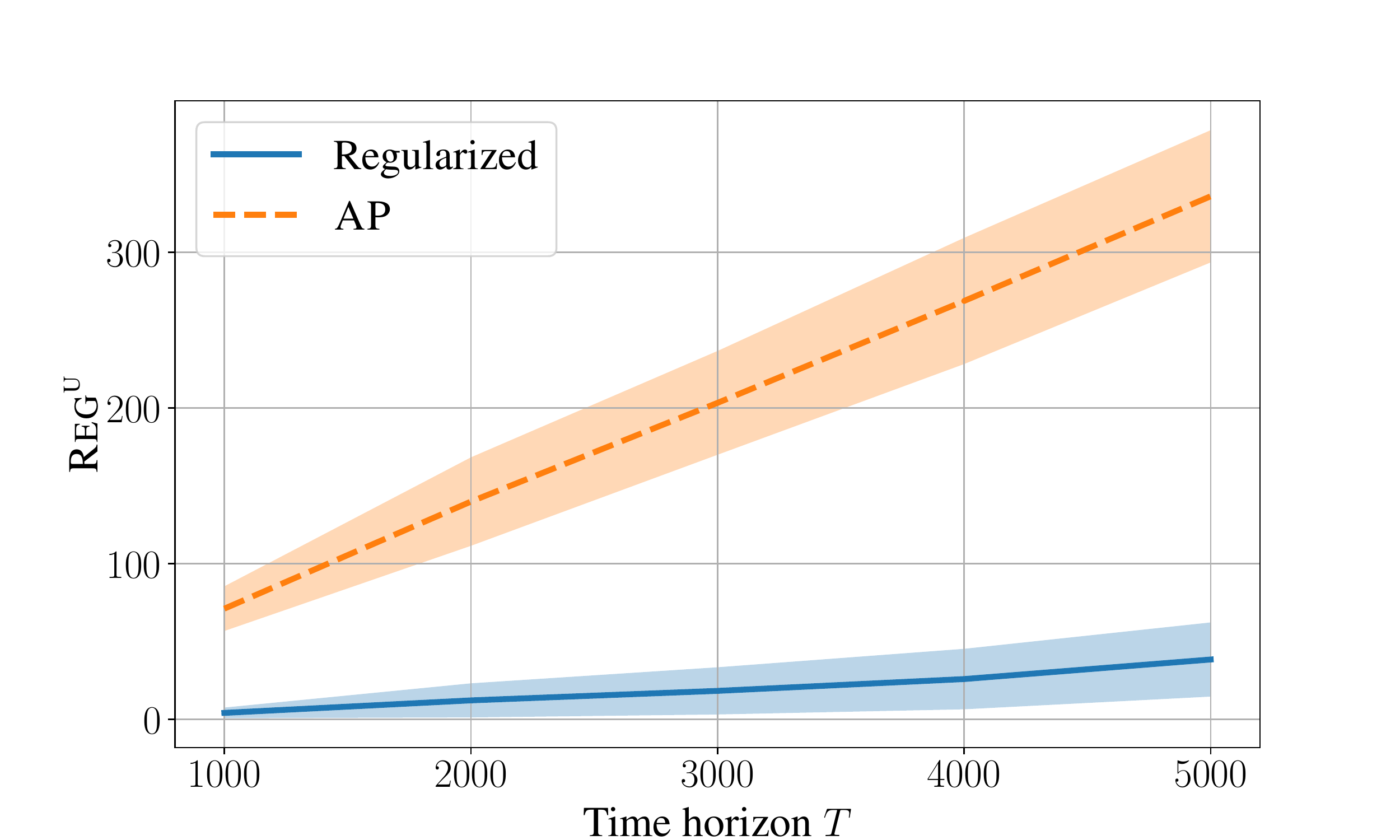}
\end{minipage}
\hspace{.5cm}
\begin{minipage}{.45\textwidth}
\includegraphics[scale=0.28]{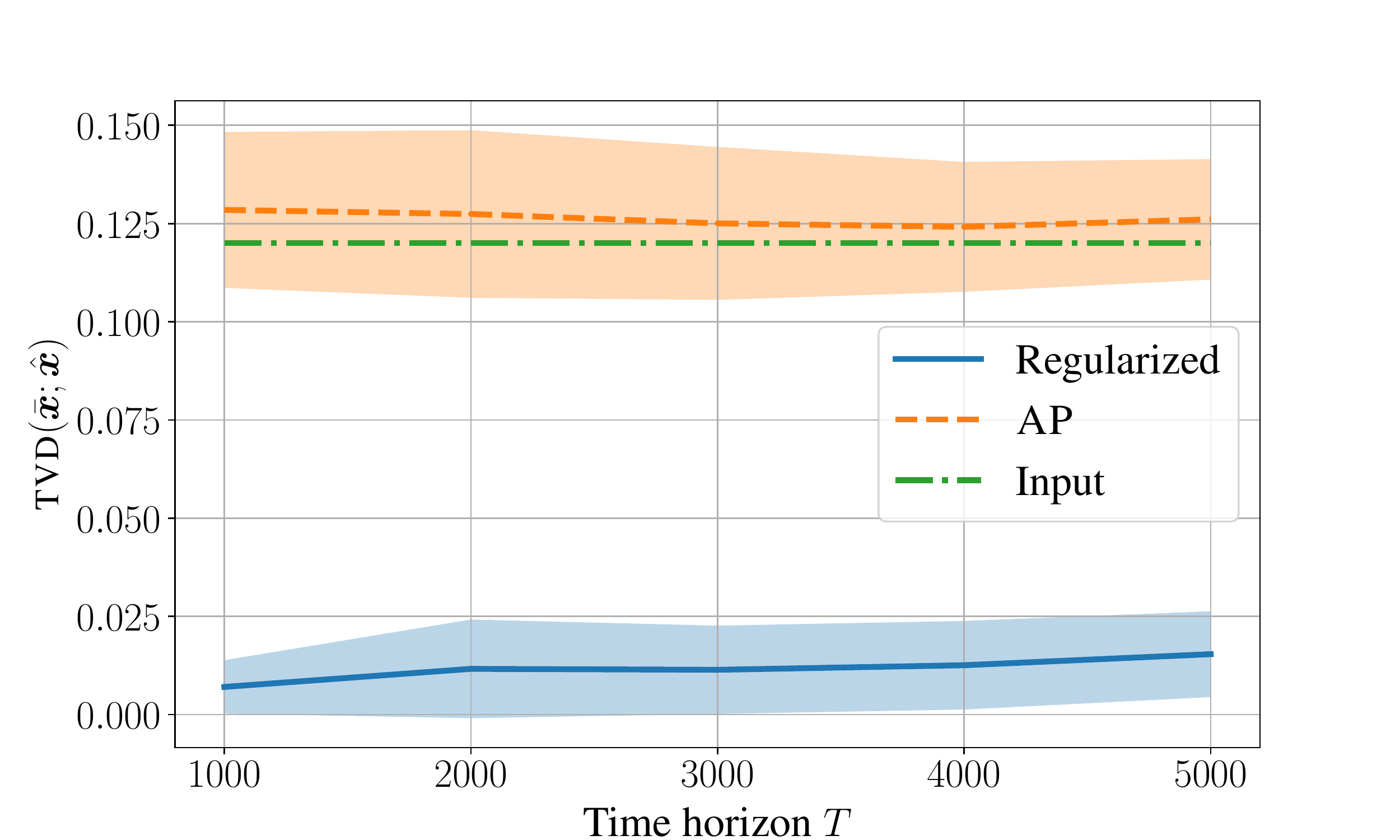}
\end{minipage}

\begin{minipage}{.45\textwidth}
\includegraphics[scale=0.28]{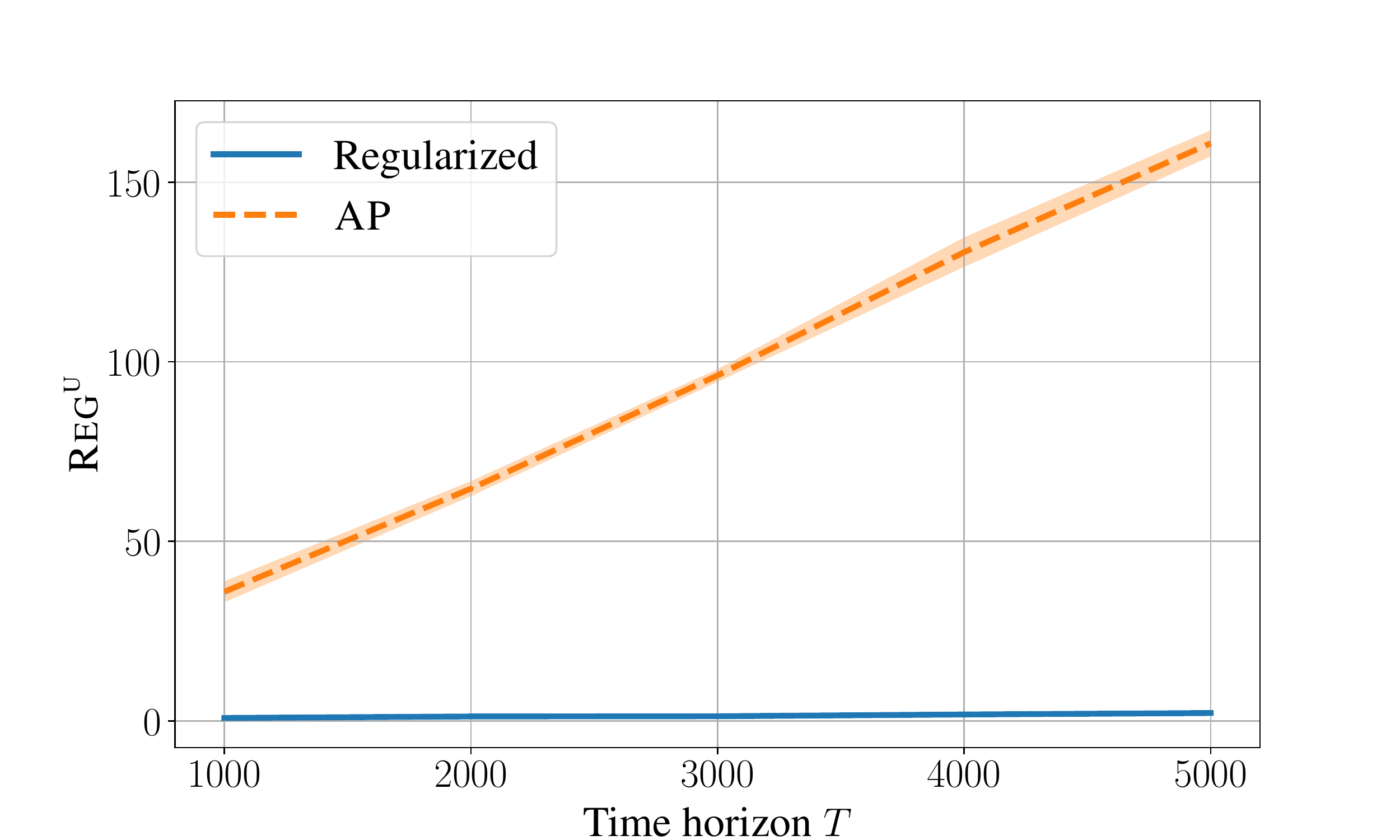}
\end{minipage}
\hspace{.5cm}
\begin{minipage}{.45\textwidth}
\includegraphics[scale=0.28]{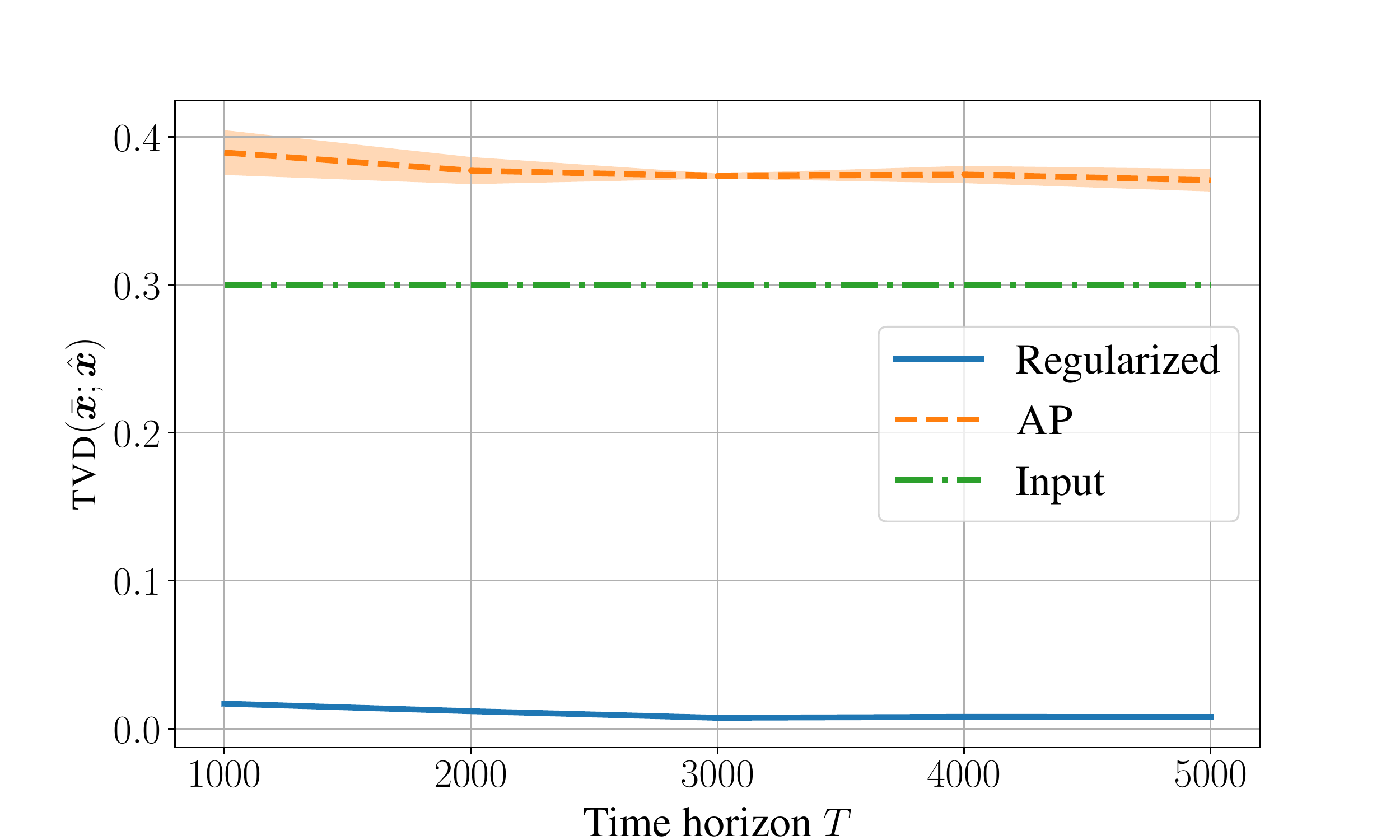}
\end{minipage}
\caption{\footnotesize \textbf{(Top-Left)}: Empirical upper bound on the regret (i.e., $\reg^\textsc{u}$ as defined in~\cref{eq:regret ub}) for breakdown \textsc{B2}. \textbf{(Top-Right)}: Total variation distance from the target distribution $\hat\xvec$ for \textsc{B2}. \textbf{(Bottom-Left)}: Empirical upper bound on the regret for breakdown \textsc{B3}. \textbf{(Bottom-Right)}: Total variation distance from the target distribution $\hat\xvec$ for \textsc{B3}. \emph{Input} denotes the total variation distance between $\hat\xvec$ and the underlying empirical category distribution computed on the dataset.}
\label{fig:exp}
\end{figure}

\section{Experimental evaluation}\label{sec: exp eval}

In this section, we present numerical experiments on data from a large Internet advertising company.

\paragraph{Experimental setup}

We construct a real-world dataset through logs of a large Internet advertising company. In this dataset, we have one advertiser (more specifically, we consider data at individual campaign level) participating in a sequence of auctions. We consider only ad requests coming from mobile devices. For each ad request we retrieve the advertiser's bid, the price computed via the auction mechanism, and the relevant categories for the request.
We observe that, due to the nature of the bidding system, the bids provide a good approximation to the true advertiser's valuations.
Bids and prices are suitably normalized for privacy reasons. 
Then, we generate a dataset consisting of $10^6$ entries (i.e., one entry per ad requests) and, at each run of the algorithm, we sample from this dataset $T$ requests without replacement.
We consider two different ad-request breakdowns. The first one, denoted as breakdown \textsc{B2}, is a binary breakdown (i.e., $m=2$) where the category is the \emph{type} of mobile device originating the request (we consider the two most frequent such types). The second one, which we denote by \textsc{B3}, is a breakdown into three categories (i.e., $m=3$) determined by the manufacturer of the mobile device.
In all of the experiments we set $\rho$ to be approximately equal to the $0.5$ quantile of the empirical distribution of normalized prices from our dataset. In this way, the advertiser is set to win at most half of the ad requests which they are presented.

\paragraph{Algorithms} 
We instantiate parity ray regularizers with $D_2(\xvec;\bar\xvec)=\|\xvec-\bar\xvec\|_2$ as the pseudo-distance measure.
Moreover, given a sequence of input tuples $(v_t,p_t,\cvec_t)_{t=1}^T$, let 
\begin{equation*}
        \opt^\textsc{u}\mleft((v_t,p_t,\cvec_t)_{t=1}^T\mright)\defeq\max_{\xvec:x_t\in\X_t}
        \sum_{t=1}^T f_t(x_t)\,\,\text{\normalfont s.t. }\,\sum_{t=1}^T p_tx_t\le T\rho
\end{equation*}
be the value of an optimal allocation in absence of regularization penalties. Then, we evaluate the regret guarantees of our algorithm by computing an upper bound to the parity-regularized regret $\opt(\distr)-\rew(\distr)$. In particular, we compute
\begin{equation}\label{eq:regret ub}
\reg^\textsc{u}\mleft((x_t,v_t,p_t,\cvec_t)_{t=1}^T\mright)\defeq \opt^\textsc{u}\mleft((v_t,p_t,\cvec_t)_{t=1}^T\mright)-\rew\mleft((x_t,v_t,p_t,\cvec_t)_{t=1}^T\mright),
\end{equation}
where $\rew\mleft((x_t,v_t,p_t,\cvec_t)_{t=1}^T\mright)$ is the realized reward for the sequence of online decisions $\xvec$.
The baseline algorithm which we employ is the \emph{adaptive pacing algorithm} (\textsc{AP}) by~\citet{balseiro2019learning}, which guarantees optimal regret in the stochastic setting without regularization.
The step size is $\eta=0.1/\sqrt{T}$. This choice guarantees the optimal order of regret both for our algorithm, and for adaptive pacing (in the non-reqularized setting). Dual variables are initialized to $0$.
Convex programs are solved using the Xpress Optimization Suite 8.10~\cite{ficosuite}. 
All experiments are run on a 24-core machine with 57 GB of RAM.

\paragraph{Results}

We run $20$ independent trials for each time horizon $T\in\{10^3,2\cdot 10^3,\ldots,5\cdot10^3\}$. 
The left column of~\cref{fig:exp} reports the empirical upper bounds on the regret computed as in~\cref{eq:regret ub}. The right column of~\cref{fig:exp} reports the \emph{total variation distance} between the realized distributions of impressions $\bar\xvec$ and the target distribution $\hat\xvec$. We compute this distance as $\textsc{tvd}(\bar\xvec;\hat\xvec)=\sup_j|\bar x_j-\hat x_j|$. 
The first row of~\cref{fig:exp} reports result for the \textsc{B2} breakdown. In this setting, we set $\hat\xvec=(.5,.5)$, that is, the advertiser seeks to reach a uniform realized distribution of impressions over the two mobile device types.
Algorithm~\ref{alg:omd} achieves small regret over all time horizons, and guarantees a realized distribution of impression close to the target. On the other hand, the adaptive pacing algorithm shows linear regret in $T$, which is due to the penalty for not being able to steer the realized distribution of impression towards the target. 
The same behavior can be observed in the \textsc{B3} setting. Here, the empirical distribution of categories of the dataset is roughly $(.5,.3,.2)$. To make the problem harder, we try to skew the realized distribution of impressions towards the least frequent category by setting $\hat\xvec=(.1,.3,.6)$. Algorithm~\ref{alg:omd} manages to keep the resulting distribution close to the target even in this challenging setting. It is interesting to notice that the distribution over categories of the input data would be closer to the target distribution than what achieved by the adaptive pacing algorithm.
The same behavior can be observed when we artificially create an unbalance between the distribution of prices for the different categories. In these extreme cases, the adaptive pacing algorithm forces the advertiser to win only impression of the \emph{cheap} category (see Appendix~\ref{sec:exp appendix}).

\small{
\bibliographystyle{plainnat}
\bibliography{refs}

\begin{thebibliography}{46}
\providecommand{\natexlab}[1]{#1}
\providecommand{\url}[1]{\texttt{#1}}
\expandafter\ifx\csname urlstyle\endcsname\relax
  \providecommand{\doi}[1]{doi: #1}\else
  \providecommand{\doi}{doi: \begingroup \urlstyle{rm}\Url}\fi

\bibitem[Agarwal et~al.(2014)Agarwal, Ghosh, Wei, and You]{agarwal2014budget}
Deepak Agarwal, Souvik Ghosh, Kai Wei, and Siyu You.
\newblock Budget pacing for targeted online advertisements at linkedin.
\newblock In \emph{Proceedings of the 20th ACM SIGKDD international conference
  on Knowledge discovery and data mining}, pages 1613--1619, 2014.

\bibitem[Agrawal and Devanur(2014)]{agrawal2014fast}
Shipra Agrawal and Nikhil~R Devanur.
\newblock Fast algorithms for online stochastic convex programming.
\newblock In \emph{Proceedings of the twenty-sixth annual ACM-SIAM symposium on
  Discrete algorithms}, pages 1405--1424. SIAM, 2014.

\bibitem[Agrawal et~al.(2014)Agrawal, Wang, and Ye]{agrawal2014dynamic}
Shipra Agrawal, Zizhuo Wang, and Yinyu Ye.
\newblock A dynamic near-optimal algorithm for online linear programming.
\newblock \emph{Operations Research}, 62\penalty0 (4):\penalty0 876--890, 2014.

\bibitem[Agrawal et~al.(2016)Agrawal, Devanur, and Li]{agrawal2016efficient}
Shipra Agrawal, Nikhil~R Devanur, and Lihong Li.
\newblock An efficient algorithm for contextual bandits with knapsacks, and an
  extension to concave objectives.
\newblock In \emph{Conference on Learning Theory}, pages 4--18. PMLR, 2016.

\bibitem[Amin et~al.(2012)Amin, Kearns, Key, and Schwaighofer]{amin2012budget}
Kareem Amin, Michael Kearns, Peter Key, and Anton Schwaighofer.
\newblock Budget optimization for sponsored search: Censored learning in mdps.
\newblock \emph{arXiv preprint arXiv:1210.4847}, 2012.

\bibitem[Arlotto and Gurvich(2019)]{arlotto2019uniformly}
Alessandro Arlotto and Itai Gurvich.
\newblock Uniformly bounded regret in the multisecretary problem.
\newblock \emph{Stochastic Systems}, 9\penalty0 (3):\penalty0 231--260, 2019.

\bibitem[Avadhanula et~al.(2021)Avadhanula, Colini-Baldeschi, Leonardi,
  Sankararaman, and Schrijvers]{avadhanula2021stochastic}
Vashist Avadhanula, Riccardo Colini-Baldeschi, Stefano Leonardi, Karthik~Abinav
  Sankararaman, and Okke Schrijvers.
\newblock Stochastic bandits for multi-platform budget optimization in online
  advertising.
\newblock \emph{arXiv preprint arXiv:2103.10246}, 2021.

\bibitem[Badanidiyuru et~al.(2014)Badanidiyuru, Langford, and
  Slivkins]{badanidiyuru2014resourceful}
Ashwinkumar Badanidiyuru, John Langford, and Aleksandrs Slivkins.
\newblock Resourceful contextual bandits.
\newblock In \emph{Conference on Learning Theory}, pages 1109--1134. PMLR,
  2014.

\bibitem[Balseiro et~al.(2017)Balseiro, Kim, Mahdian, and
  Mirrokni]{balseiro2017budget}
Santiago Balseiro, Anthony Kim, Mohammad Mahdian, and Vahab Mirrokni.
\newblock Budget management strategies in repeated auctions.
\newblock In \emph{Proceedings of the 26th International Conference on World
  Wide Web}, pages 15--23, 2017.

\bibitem[Balseiro et~al.(2020{\natexlab{a}})Balseiro, Lu, and
  Mirrokni]{balseiro2020best}
Santiago Balseiro, Haihao Lu, and Vahab Mirrokni.
\newblock The best of many worlds: Dual mirror descent for online allocation
  problems.
\newblock \emph{arXiv preprint arXiv:2011.10124}, 2020{\natexlab{a}}.

\bibitem[Balseiro et~al.(2020{\natexlab{b}})Balseiro, Lu, and
  Mirrokni]{balseiro2020dual}
Santiago Balseiro, Haihao Lu, and Vahab Mirrokni.
\newblock Dual mirror descent for online allocation problems.
\newblock In \emph{International Conference on Machine Learning}, pages
  613--628. PMLR, 2020{\natexlab{b}}.

\bibitem[Balseiro et~al.(2020{\natexlab{c}})Balseiro, Lu, and
  Mirrokni]{balseiro2020regularized}
Santiago Balseiro, Haihao Lu, and Vahab Mirrokni.
\newblock Regularized online allocation problems: Fairness and beyond.
\newblock \emph{arXiv preprint arXiv:2007.00514}, 2020{\natexlab{c}}.

\bibitem[Balseiro and Gur(2019)]{balseiro2019learning}
Santiago~R Balseiro and Yonatan Gur.
\newblock Learning in repeated auctions with budgets: Regret minimization and
  equilibrium.
\newblock \emph{Management Science}, 65\penalty0 (9):\penalty0 3952--3968,
  2019.

\bibitem[Balseiro et~al.(2015)Balseiro, Besbes, and
  Weintraub]{balseiro2015repeated}
Santiago~R Balseiro, Omar Besbes, and Gabriel~Y Weintraub.
\newblock Repeated auctions with budgets in ad exchanges: Approximations and
  design.
\newblock \emph{Management Science}, 61\penalty0 (4):\penalty0 864--884, 2015.

\bibitem[Borgs et~al.(2007)Borgs, Chayes, Immorlica, Jain, Etesami, and
  Mahdian]{borgs2007dynamics}
Christian Borgs, Jennifer Chayes, Nicole Immorlica, Kamal Jain, Omid Etesami,
  and Mohammad Mahdian.
\newblock Dynamics of bid optimization in online advertisement auctions.
\newblock In \emph{Proceedings of the 16th international conference on World
  Wide Web}, pages 531--540, 2007.

\bibitem[Borwein and Lewis(2010)]{borwein2010convex}
Jonathan Borwein and Adrian~S Lewis.
\newblock \emph{Convex analysis and nonlinear optimization: theory and
  examples}.
\newblock Springer Science \& Business Media, 2010.

\bibitem[Conitzer et~al.(2019)Conitzer, Kroer, Panigrahi, Schrijvers, Sodomka,
  Stier-Moses, and Wilkens]{conitzer2019pacing}
Vincent Conitzer, Christian Kroer, Debmalya Panigrahi, Okke Schrijvers, Eric
  Sodomka, Nicolas~E Stier-Moses, and Chris Wilkens.
\newblock Pacing equilibrium in first-price auction markets.
\newblock In \emph{Proceedings of the 2019 ACM Conference on Economics and
  Computation}, pages 587--587, 2019.

\bibitem[Conitzer et~al.(2021)Conitzer, Kroer, Sodomka, and
  Stier-Moses]{conitzer2021multiplicative}
Vincent Conitzer, Christian Kroer, Eric Sodomka, and Nicolas~E. Stier-Moses.
\newblock Multiplicative pacing equilibria in auction markets.
\newblock \emph{Operations Research (to appear)}, 2021.
\newblock URL \url{https://arxiv.org/abs/1706.07151}.

\bibitem[Devanur and Hayes(2009)]{devanur2009adwords}
Nikhil~R Devanur and Thomas~P Hayes.
\newblock The adwords problem: online keyword matching with budgeted bidders
  under random permutations.
\newblock In \emph{Proceedings of the 10th ACM conference on Electronic
  commerce}, pages 71--78, 2009.

\bibitem[Devanur et~al.(2012)Devanur, Sivan, and
  Azar]{devanur2012asymptotically}
Nikhil~R Devanur, Balasubramanian Sivan, and Yossi Azar.
\newblock Asymptotically optimal algorithm for stochastic adwords.
\newblock In \emph{Proceedings of the 13th ACM Conference on Electronic
  Commerce}, pages 388--404, 2012.

\bibitem[Devanur et~al.(2019)Devanur, Jain, Sivan, and
  Wilkens]{devanur2019near}
Nikhil~R Devanur, Kamal Jain, Balasubramanian Sivan, and Christopher~A Wilkens.
\newblock Near optimal online algorithms and fast approximation algorithms for
  resource allocation problems.
\newblock \emph{Journal of the ACM (JACM)}, 66\penalty0 (1):\penalty0 1--41,
  2019.

\bibitem[Edelman et~al.(2007)Edelman, Ostrovsky, and
  Schwarz]{edelman2007internet}
Benjamin Edelman, Michael Ostrovsky, and Michael Schwarz.
\newblock Internet advertising and the generalized second-price auction:
  Selling billions of dollars worth of keywords.
\newblock \emph{American economic review}, 97\penalty0 (1):\penalty0 242--259,
  2007.

\bibitem[{eMarketer}(2020)]{spending}
{eMarketer}.
\newblock {eMarketer} us digital ad spending update q2 2020.
\newblock
  \url{https://www.emarketer.com/content/us-digital-ad-spending-update-q2-2020},
  2020.
\newblock Accessed: 2020-02-03.

\bibitem[Feldman et~al.(2007)Feldman, Muthukrishnan, Pal, and
  Stein]{feldman2007budget}
Jon Feldman, Shanmugavelayutham Muthukrishnan, Martin Pal, and Cliff Stein.
\newblock Budget optimization in search-based advertising auctions.
\newblock In \emph{Proceedings of the 8th ACM conference on Electronic
  commerce}, pages 40--49, 2007.

\bibitem[Feldman et~al.(2009{\natexlab{a}})Feldman, Korula, Mirrokni,
  Muthukrishnan, and P{\'a}l]{feldman2009online}
Jon Feldman, Nitish Korula, Vahab Mirrokni, Shanmugavelayutham Muthukrishnan,
  and Martin P{\'a}l.
\newblock Online ad assignment with free disposal.
\newblock In \emph{International workshop on internet and network economics},
  pages 374--385. Springer, 2009{\natexlab{a}}.

\bibitem[Feldman et~al.(2009{\natexlab{b}})Feldman, Mehta, Mirrokni, and
  Muthukrishnan]{feldman2009onlineb}
Jon Feldman, Aranyak Mehta, Vahab Mirrokni, and Shan Muthukrishnan.
\newblock Online stochastic matching: Beating 1-1/e.
\newblock In \emph{2009 50th Annual IEEE Symposium on Foundations of Computer
  Science}, pages 117--126. IEEE, 2009{\natexlab{b}}.

\bibitem[FICO(2021)]{ficosuite}
FICO.
\newblock Xpress optimization suite, 2021.
\newblock URL \url{https://www.fico.com/en/products/fico-xpress-optimization}.

\bibitem[Gelauff et~al.(2020)Gelauff, Goel, Munagala, and
  Yandamuri]{gelauff2020advertising}
Lodewijk Gelauff, Ashish Goel, Kamesh Munagala, and Sravya Yandamuri.
\newblock Advertising for demographically fair outcomes.
\newblock \emph{arXiv preprint arXiv:2006.03983}, 2020.

\bibitem[Goel and Mehta(2008)]{goel2008online}
Gagan Goel and Aranyak Mehta.
\newblock Online budgeted matching in random input models with applications to
  adwords.
\newblock In \emph{SODA}, volume~8, pages 982--991, 2008.

\bibitem[Gummadi et~al.(2013)Gummadi, Key, and Proutiere]{gummadi2013optimal}
Ramki Gummadi, Peter Key, and Alexandre Proutiere.
\newblock Optimal bidding strategies and equilibria in dynamic auctions with
  budget constraints.
\newblock \emph{Available at SSRN 2066175}, 2013.

\bibitem[Hosanagar and Cherepanov(2008)]{hosanagar2008optimal}
Kartik Hosanagar and Vadim Cherepanov.
\newblock Optimal bidding in stochastic budget constrained slot auctions.
\newblock In \emph{Proceedings of the 9th ACM Conference on Electronic
  Commerce}, pages 20--20, 2008.

\bibitem[Jasin(2015)]{jasin2015performance}
Stefanus Jasin.
\newblock Performance of an lp-based control for revenue management with
  unknown demand parameters.
\newblock \emph{Operations Research}, 63\penalty0 (4):\penalty0 909--915, 2015.

\bibitem[Li and Ye(2019)]{li2019online}
Xiaocheng Li and Yinyu Ye.
\newblock Online linear programming: Dual convergence, new algorithms, and
  regret bounds.
\newblock \emph{arXiv preprint arXiv:1909.05499}, 2019.

\bibitem[Li et~al.(2020)Li, Sun, and Ye]{li2020simple}
Xiaocheng Li, Chunlin Sun, and Yinyu Ye.
\newblock Simple and fast algorithm for binary integer and online linear
  programming.
\newblock \emph{arXiv preprint arXiv:2003.02513}, 2020.

\bibitem[Mahdian et~al.(2012)Mahdian, Nazerzadeh, and
  Saberi]{mahdian2012online}
Mohammad Mahdian, Hamid Nazerzadeh, and Amin Saberi.
\newblock Online optimization with uncertain information.
\newblock \emph{ACM Transactions on Algorithms (TALG)}, 8\penalty0
  (1):\penalty0 1--29, 2012.

\bibitem[Mehta et~al.(2007)Mehta, Saberi, Vazirani, and
  Vazirani]{mehta2007adwords}
Aranyak Mehta, Amin Saberi, Umesh Vazirani, and Vijay Vazirani.
\newblock Adwords and generalized online matching.
\newblock \emph{Journal of the ACM (JACM)}, 54\penalty0 (5):\penalty0 22--es,
  2007.

\bibitem[Mirrokni et~al.(2012)Mirrokni, Gharan, and
  Zadimoghaddam]{mirrokni2012simultaneous}
Vahab~S Mirrokni, Shayan~Oveis Gharan, and Morteza Zadimoghaddam.
\newblock Simultaneous approximations for adversarial and stochastic online
  budgeted allocation.
\newblock In \emph{Proceedings of the twenty-third annual ACM-SIAM symposium on
  Discrete Algorithms}, pages 1690--1701. SIAM, 2012.

\bibitem[Nasr and Tschantz(2020)]{nasr2020bidding}
Milad Nasr and Michael~Carl Tschantz.
\newblock Bidding strategies with gender nondiscrimination constraints for
  online ad auctions.
\newblock In \emph{Proceedings of the 2020 Conference on Fairness,
  Accountability, and Transparency}, pages 337--347, 2020.

\bibitem[Orabona(2019)]{orabona2019modern}
Francesco Orabona.
\newblock A modern introduction to online learning.
\newblock \emph{arXiv preprint arXiv:1912.13213}, 2019.

\bibitem[Varian(2007)]{varian2007position}
Hal~R Varian.
\newblock Position auctions.
\newblock \emph{International Journal of industrial Organization}, 25\penalty0
  (6):\penalty0 1163--1178, 2007.

\bibitem[Wu et~al.(2018)Wu, Chen, Yang, Wang, Tan, Zhang, Xu, and
  Gai]{wu2018budget}
Di~Wu, Xiujun Chen, Xun Yang, Hao Wang, Qing Tan, Xiaoxun Zhang, Jian Xu, and
  Kun Gai.
\newblock Budget constrained bidding by model-free reinforcement learning in
  display advertising.
\newblock In \emph{Proceedings of the 27th ACM International Conference on
  Information and Knowledge Management}, pages 1443--1451, 2018.

\bibitem[Xu et~al.(2015)Xu, Lee, Li, Qi, and Lu]{xu2015smart}
Jian Xu, Kuang-chih Lee, Wentong Li, Hang Qi, and Quan Lu.
\newblock Smart pacing for effective online ad campaign optimization.
\newblock In \emph{Proceedings of the 21th ACM SIGKDD International Conference
  on Knowledge Discovery and Data Mining}, pages 2217--2226, 2015.

\bibitem[Zhang et~al.(2012)Zhang, Zhang, Gao, Yu, Yuan, and
  Liu]{zhang2012joint}
Weinan Zhang, Ying Zhang, Bin Gao, Yong Yu, Xiaojie Yuan, and Tie-Yan Liu.
\newblock Joint optimization of bid and budget allocation in sponsored search.
\newblock In \emph{Proceedings of the 18th ACM SIGKDD international conference
  on knowledge discovery and data mining}, pages 1177--1185, 2012.

\bibitem[Zhang et~al.(2014)Zhang, Yuan, and Wang]{zhang2014optimal}
Weinan Zhang, Shuai Yuan, and Jun Wang.
\newblock Optimal real-time bidding for display advertising.
\newblock In \emph{Proceedings of the 20th ACM SIGKDD international conference
  on Knowledge discovery and data mining}, pages 1077--1086, 2014.

\bibitem[Zhang et~al.(2016)Zhang, Rong, Wang, Zhu, and Wang]{zhang2016feedback}
Weinan Zhang, Yifei Rong, Jun Wang, Tianchi Zhu, and Xiaofan Wang.
\newblock Feedback control of real-time display advertising.
\newblock In \emph{Proceedings of the Ninth ACM International Conference on Web
  Search and Data Mining}, pages 407--416, 2016.

\bibitem[Zhou et~al.(2008)Zhou, Chakrabarty, and Lukose]{zhou2008budget}
Yunhong Zhou, Deeparnab Chakrabarty, and Rajan Lukose.
\newblock Budget constrained bidding in keyword auctions and online knapsack
  problems.
\newblock In \emph{International Workshop on Internet and Network Economics},
  pages 566--576. Springer, 2008.

\end{thebibliography}
}
\clearpage
\appendix
\section{Related works}\label{sec:related works}

Online allocation problems have been studied from a number of different perspectives.
One stream of works addresses the case in which inputs are chosen by an adversary (see, e.g.,~\citet{mehta2007adwords,feldman2009online}).
In our work, motivated by Interned advertising applications, we study the stochastic case in which inputs are drawn i.i.d. from an unknown distribution. Stochastic input models have been studied, among others, by~\citet{devanur2009adwords,mirrokni2012simultaneous,devanur2012asymptotically,mahdian2012online,feldman2009onlineb,goel2008online}.

In the setting of stochastic input models with linear reward functions it is possible to guarantee an optimal order of regret of $O(T^{1/2})$~\cite{agrawal2014dynamic,devanur2019near}. 
More recent works by~\citet{li2020simple} and~\citet{balseiro2020dual} propose simple algorithms with $O(T^{1/2})$ regret guarantees that, in contrast with previous works, do not require periodically solving large linear programs.
Finally, under some specific structural assumptions, it is possible to obtain a regret bound of order $O(\log T)$~\cite{li2019online,jasin2015performance}.

As discussed in~\cref{intro} and~\cref{sec:parity ray}, we study how to optimize for parity-regularized utilities, which are non-separable in nature. 
Two related works addressing similar problems are the following:
\begin{itemize}
\item~\citet{agrawal2014dynamic} focus on how to solve general online stochastic convex programs that allow general concave objectives and convex constraints.
However, their approach requires periodically solving convex optimization programs on historical data, unless the optimal value of the objective is known. Moreover, their algorithm handles resource constraints as \emph{soft} constraints, while we are interested in enforcing a hard constraint on the budget.

\item~\citet{balseiro2020regularized} propose a dual subgradient descent algorithm for a general class of non-separable objectives. 
In our work, we focus on the specific real-world problem of modeling advertisers' distributional preferences within standard pacing systems.~\citet{balseiro2020regularized} focus on a more general class of problems, without providing a concrete way to model our objective. 
In particular, we observe that the regularizer for \emph{max-min fairness} which the authors propose in their Example 2 is not suitable for our setting. In order to employ their regularizer to model distributional preferences, an advertisers would have to appropriately pick parameters $\lambda$ and $\rho_i$ depending on the contingent state of the environment during the $T$ time horizon. Specifically, this choice would require knowing beforehand the underlying input distribution of requests, as well as the \emph{scale} of the number of ad opportunities that the advertiser will win during the $T$ steps. 
On the other hand, the parity ray regularizer simplifies the interaction between the advertiser and the pacing mechanism, as the advertiser just needs to specify their target distribution, without requiring any additional knowledge on the current state of the environment. Moreover, the parity ray regularizer can be equipped with any pseudo-distance measure satisfying~\cref{assumption dgf}.
Finally, in our paper we adapt the online mirror descent scheme by~\citet{balseiro2020best} to the case of non-separable objectives (which allows for a more general class of reference functions than the online gradient descent scheme of \citet{balseiro2020regularized}). The dual-descent scheme by~\citet{balseiro2020regularized} can be derived from our OMD scheme through an appropriate choice of the reference function. 
Our analysis draws heavily on the ideas developed in both~\citet{balseiro2020best} and~\citet{balseiro2020regularized}.
By exploiting the specific structure of our problem we show how to relax some of the assumptions required by~\citet{balseiro2020best} (see, .e.g., Assumption 2 of their paper). 
\end{itemize}

There is also a stream of literature studying budget management problems within the framework of stochastic bandits with knapsacks constraints (see, e.g., the work by \citet{avadhanula2021stochastic}).
The difference between our setting and the bandits literature is that, in online allocation problems, the advertiser observes the reward function and consumption matrix \emph{before} making a decision, while in the bandit setting those are revealed \emph{after} each decision is made.
In order to deal with unknown reward functions and consumption matrices, the most frequent approach in contextual bandits is to discretize the context and action space. However, this inevitably leads to worse performance, and does not fit well with real-world pacing systems such as those described in \cref{sec:pacing in auctions}. As an example, in Internet advertising applications bandits algorithms typically attain regret on the order of  $O(T^{3/4})$~\cite{agrawal2016efficient,badanidiyuru2014resourceful}.

Another line of research studies how individual bidders should optimize their budget spending across a set of auctions. This problem has been cast, for example, as a knapsack problem~\cite{feldman2007budget,borgs2007dynamics,zhou2008budget}, a Markov Decision Process~\cite{gummadi2013optimal,amin2012budget,wu2018budget}, a constrained optimization problem~\cite{zhang2012joint,zhang2014optimal}, and an optimal control problem~\cite{xu2015smart,zhang2016feedback}.

In a recent work,~\citet{nasr2020bidding} study how to compute bidding strategies for advertisers which aim at satisfying various types of parity constraints with respect to a binary user breakdown. 
Their model is different from ours since (i) we are interested in letting advertisers reach an arbitrary target distribution over an arbitrary user breakdown, while their goal is guaranteeing that specific parity constraints are satisfied for a binary user breakdown; (ii) we have a hard constraint on the budget, while they assume advertisers have an unlimited budget; (iii) the techniques which are adopted are very different since the authors model the problem as an MDP which is then solved with an ad-hoc method to properly manage the constraints, while we address distributional preferences directly within the pacing architecture.

\section{Omitted proofs}\label{sec:omitted}

\propRegularizer*
\begin{proof}
    Concavity follows from the facts that $D$ is jointly convex and convexity is preserved by taking the infimum over one argument, finally taking the negative leads to concavity.
    The fact that $R$ is $L$ Lipschitz follows since $D$ is $L$ Lipschitz in the second argument for each $\gamma$, and infimum preserves Lipschitz continuity when all functions in the infimum have the same Lipschitz constant.
    Closedness is implied by $L$ Lipschitzness, since closedness of a convex function is equal to lower-semicontinuity, a weaker condition that Lipschitz continuity.
\end{proof}

\domainDual*
\begin{proof}
Given $\lambdavec\in\mathbb{R}^m$, the following holds
\begin{align*}
    R^\ast(-\lambdavec)&=\sup_{\bar\xvec\in\Delta^m_+}\mleft\{ R(\bar\xvec)+\langle\lambdavec,\bar \xvec\rangle\mright\} \\ 
    &\le  \sup_{\bar\xvec\in\Delta^m_+} R(\bar\xvec)+\sup_{\bar\xvec\in\Delta^m_+}\langle\lambdavec,\bar \xvec\rangle \\
    &\le ||\lambdavec||_1 < +\infty,
\end{align*}
where the third inequality follows by Assumption~\ref{assumption regularizer}. Analogously, given $\lambda\in\mathbb{R}^m$, we have \[R^\ast(\lambdavec)\ge\underline{r}-||\lambdavec||_1>-\infty,\]
which proves the statement.
\end{proof}

\ubOpt*
\begin{proof}
Let $(\bb{v},\bb{p},\cvec)=((v_t)_{t=1}^T,(p_t)_{t=1}^T,(\cvec_t)_{t=1}^T)$ be an arbitrary sequence of input tuples. From the definition of the baseline $\opt$ we have that:
\begin{align*}
        \opt(\mathscr{P})\defeq & \expe_{(\bb{v},\bb{p},\cvec)\sim\distr}\mleft[\hspace{-1.25mm}\begin{array}{l}
        \displaystyle
        \max_{\xvec:x_t\in\X_t}
        \sum_{t=1}^T f_t(x_t) + T R\mleft(\frac{1}{T}\sum_{t=1}^T \cvec_t x_t\mright)\\ [2mm]
        \displaystyle \text{\normalfont s.t. } \sum_{t=1}^T p_tx_t\le T\rho\end{array}\mright]\\
        = & 
        \expe_{(\bb{v},\bb{p},\cvec)\sim\distr}\mleft[\hspace{-1.25mm}\begin{array}{l}
        \displaystyle
        \max_{\substack{\xvec:x_t\in\X_t\\ \bar\xvec \in\Delta_+^m}}
        \sum_{t=1}^T f_t(x_t) + T R\mleft(\bar\xvec\mright)\\ [3mm]
        \displaystyle \,\, \text{\normalfont s.t. } \hspace{.4cm}\sum_{t=1}^T p_tx_t\le T\rho\\[3mm]
        \displaystyle \hspace{.5cm}\hspace{.5cm}\bar\xvec=\frac{1}{T}\sum_{t=1}^T \cvec_t x_t
    \end{array}\mright].
\end{align*}

Then, by exploiting weak duality and the fact that input tuples are drawn i.i.d. from $\distr$, we have that, for any $(\mu,\lambdavec)\in\dom$,
\begin{align*}
  \opt(\mathscr{P})&\le \expe_{(\bb{v},\bb{p},\cvec)\sim\distr}\mleft[\max_{\xvec,\bar\xvec}\mleft\{ \sum_{t=1}^T f_t(x_t) + T R\mleft(\bar\xvec\mright) - \mu\sum_{t=1}^T p_tx_t + T\mu\rho + \langle\lambdavec,T\bar\xvec -  \sum_{t=1}^T \cvec_t x_t\rangle\mright\} \mright]\\
  &\le \expe_{(\bb{v},\bb{p},\cvec)\sim\distr}\mleft[ \sum_{t=1}^T f^\ast_t(\mu p_t + \langle\lambdavec,\cvec_t\rangle)+TR^\ast(-\lambdavec)+T\rho\mu\mright]\\
    &= T \,\expe_{(v_t, p_t,\cvec_t)\sim\distr}\mleft[ f^\ast_t(\mu p_t + \langle\lambdavec,\cvec_t\rangle)+R^\ast(-\lambdavec)\mright]+\rho\mu\\
  &=TD(\mu,\lambdavec|\distr).
\end{align*}
This proves our statement.
\end{proof}

\lbRew*
\begin{proof}
For each $t\le\tau$, by~\cref{eq:update_x} and~\cref{eq:update_x_bar} we have
\begin{align*}
    \expe_\distr\mleft[ f_t(x_t) + R(\bar\xvec_t) \mright] & = 
    \expe\mleft[ f_t^\ast(\mu_t p_t + \langle\lambdavec_t,\cvec_t\rangle) + \mu_t p_t x_t + \langle\lambdavec_t,\cvec_t x_t\rangle + R^\ast(\lambdavec_t) -\langle\lambdavec_t, \bar\xvec_t\rangle \mright] \\
    & = \expe\mleft[ D(\mu_t,\lambdavec_t|\distr) - \rho\mu_t   + \mu_t p_t x_t + \langle\lambdavec_t,\cvec_t x_t\rangle -\langle\lambdavec_t, \bar\xvec_t\rangle \mright] \\
    &= \expe\mleft[  D(\mu_t,\lambdavec_t|\distr) - \mu_t(\rho- p_tx_t) -\langle \lambdavec_t,\bar\xvec_t - \cvec_t x_t\rangle \mright],
\end{align*}
where the second equality holds by~\cref{eq:dual_function}.
By summing over $t=1,\ldots,\tau$ we obtain
\begin{align*}
    \expe_\distr\mleft[ \sum_{t=1}^\tau f_t(x_t) + R(\bar \xvec_t)\mright] & = \sum_{t=1}^\tau\expe\mleft[ f_t(x_t) + R(\bar \xvec_t)\mright]\\
    & =\expe\mleft[\sum_{t=1}^\tau \mleft(  D(\mu_t,\lambdavec_t|\distr) - \mu_t(\rho-\mu_t p_t) -\langle \lambdavec_t,\bar\xvec_t - \cvec_t x_t\rangle \mright) \mright] \\ 
    & \ge \expe\mleft[\tau D(\bar\mu,\bar\lambdavec|\distr)- \sum_{t=1}^\tau \mleft(\mu_t(\rho-p_tx_t)-\langle\lambdavec_t,\bar\xvec_t - \cvec_tx_t\rangle  \mright)\mright],
\end{align*}
where the last inequality holds by convexity of the Lagrangian dual function. Specifically, $f_t^\ast(\mu p + \langle\lambdavec,\cvec\rangle)$ is the composition of a convex function (by definition of conjugate function) with an affine mapping, which is convex. Therefore, $D$ is convex since it is the result of a sum of convex functions, and expectation preserves convexity.
\end{proof}

For completeness, we report the standard regret bound for OMD (see, e.g.,~\citet[Section 6]{orabona2019modern} for further details). This bound will be employed in the proof of~\cref{thm:regret}.
\begin{theorem}[Regret bound for OMD]\label{thm:omd regret}
Consider a sequence of convex functions $\xi_t(\yvec)$. Let $B_\psi$ be the Bregman divergence with respect to $\psi$, and assume $\psi$ to be $\sigma$-strongly convex with respect to $\|\cdot\|_p$. Let $\bb{g}_t\in\partial\xi_t(\yvec_t)$, and
\[
\yvec_{t+1}=\argmin_{\yvec}\langle\bb{g}_t,\yvec\rangle+\frac{1}{\eta}B_\psi(\yvec,\yvec_t).
\]
If subgradients are bounded by $\|\bb{g}_t\|_{p,\ast}\le G$, then, for any $\bb{u}\in\mathcal{V}$, the following upper bound holds
\[
\sum_{t=1}^T\mleft(\xi_t(\yvec_t)-\xi_t(\bb{u})\mright)\le 
\textsc{U}(T,\bb{u}),
\]
with
\[
\textsc{U}(T,\bb{u})\defeq\frac{G^2\eta}{2\sigma}T+\frac{1}{\eta}B_{\psi}(\bb{u},\yvec_1).
\]
\end{theorem}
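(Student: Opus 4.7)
The plan is to establish the bound by the standard primal-dual analysis of online mirror descent, which reduces the per-step regret to a telescoping Bregman-divergence difference plus a small error term controlled by strong convexity.

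First, by convexity of $\xi_t$ and the subgradient inequality, the instantaneous regret is bounded linearly: $\xi_t(\yvec_t)-\xi_t(\bb{u})\le \langle \bb{g}_t,\yvec_t-\bb{u}\rangle$ for every $\bb{u}\in\mathcal{V}$. Hence it suffices to upper bound $\sum_{t=1}^T\langle \bb{g}_t,\yvec_t-\bb{u}\rangle$.

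Second, I rewrite the inner product using the OMD update. The first-order optimality condition of the update $\yvec_{t+1}=\argmin_{\yvec}\{\langle \bb{g}_t,\yvec\rangle+\tfrac{1}{\eta}B_\psi(\yvec,\yvec_t)\}$ yields $\eta\bb{g}_t=\nabla\psi(\yvec_t)-\nabla\psi(\yvec_{t+1})$ (with the appropriate normal cone inclusion if $\mathcal{V}$ is constrained). Plugging this into the three-point identity for Bregman divergences,
\[
\langle \nabla\psi(\yvec_t)-\nabla\psi(\yvec_{t+1}),\yvec_{t+1}-\bb{u}\rangle = B_\psi(\bb{u},\yvec_t)-B_\psi(\bb{u},\yvec_{t+1})-B_\psi(\yvec_{t+1},\yvec_t),
\]
I decompose $\eta\langle \bb{g}_t,\yvec_t-\bb{u}\rangle = \eta\langle \bb{g}_t,\yvec_t-\yvec_{t+1}\rangle + \eta\langle \bb{g}_t,\yvec_{t+1}-\bb{u}\rangle$, and substitute to obtain
\[
\eta\langle \bb{g}_t,\yvec_t-\bb{u}\rangle = \eta\langle \bb{g}_t,\yvec_t-\yvec_{t+1}\rangle + B_\psi(\bb{u},\yvec_t)-B_\psi(\bb{u},\yvec_{t+1})-B_\psi(\yvec_{t+1},\yvec_t).
\]

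Third, I control the leftover term $\eta\langle \bb{g}_t,\yvec_t-\yvec_{t+1}\rangle - B_\psi(\yvec_{t+1},\yvec_t)$. By Fenchel--Young applied to the dual norm, $\eta\langle \bb{g}_t,\yvec_t-\yvec_{t+1}\rangle \le \tfrac{\eta^2}{2\sigma}\|\bb{g}_t\|_{p,\ast}^2 + \tfrac{\sigma}{2}\|\yvec_t-\yvec_{t+1}\|_p^2$, and by $\sigma$-strong convexity of $\psi$ with respect to $\|\cdot\|_p$, $B_\psi(\yvec_{t+1},\yvec_t)\ge \tfrac{\sigma}{2}\|\yvec_t-\yvec_{t+1}\|_p^2$, so the two quadratic terms cancel and we obtain
\[
\eta\langle \bb{g}_t,\yvec_t-\bb{u}\rangle \le \frac{\eta^2}{2\sigma}\|\bb{g}_t\|_{p,\ast}^2 + B_\psi(\bb{u},\yvec_t)-B_\psi(\bb{u},\yvec_{t+1}).
\]

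Finally, dividing through by $\eta$, summing from $t=1$ to $T$, telescoping the Bregman terms, dropping the nonnegative $B_\psi(\bb{u},\yvec_{T+1})$, and using $\|\bb{g}_t\|_{p,\ast}\le G$ yields the claimed bound $\tfrac{G^2\eta}{2\sigma}T+\tfrac{1}{\eta}B_\psi(\bb{u},\yvec_1)$. The only subtle step is handling the first-order condition correctly when $\mathcal{V}$ is a proper subset of the ambient space, in which case the identity $\nabla\psi(\yvec_t)-\nabla\psi(\yvec_{t+1})=\eta\bb{g}_t$ is replaced by the inclusion $\nabla\psi(\yvec_t)-\nabla\psi(\yvec_{t+1})-\eta\bb{g}_t\in N_{\mathcal{V}}(\yvec_{t+1})$, and the inner product against $\yvec_{t+1}-\bb{u}$ is bounded by $0$ for $\bb{u}\in\mathcal{V}$; all subsequent inequalities then still go through in the required direction, which is the main technical care point of the proof.
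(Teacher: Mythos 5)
Your proof is correct: it is the standard three-point-identity analysis of online mirror descent (subgradient linearization, first-order optimality of the proximal step, Fenchel--Young against the $\sigma$-strong convexity of $\psi$, then telescoping the Bregman terms), which is exactly the argument the paper invokes by citing Orabona's monograph rather than reproving --- the theorem is stated ``for completeness'' with no in-paper proof. Your closing remark about the constrained case is the right care point, since in the paper's application the update in \cref{eq:dual update} is over $\dom=\mathbb{R}_{\ge0}\times\mathbb{R}^m$ and the first-order condition becomes an inequality against the normal cone, which, as you note, only tightens the bound for $\bb{u}\in\mathcal{V}$.
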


This can be exploited in order to prove the following regret bound for Algorithm~\ref{alg:omd}.

\regretBound*
\begin{proof}
Recall that $\tau$ is the time at which the budget is depleted.
By~\cref{ub_opt} and since $v_t\le \bar v$ and $p_t\ge0$ we have that, for any $(\mu,\lambdavec)\in\dom$,
\begin{align*}
   \opt(\distr) & =\frac{\tau}{T}\opt(\distr)+\frac{T-\tau}{T}\opt(\distr) \\ 
   & \le \tau D(\mu,\lambdavec|\distr) + (T-\tau)\bar v.
\end{align*}
Let $\xi_{1,t}(\mu)=\mu(\rho-p_tx_t)$ and $\xi_{2,t}(\lambdavec)=\langle\lambdavec,\bar\xvec_t - \cvec_tx_t\rangle$ be the two complementary slackness terms at iteration $t$.
From~\cref{eq:expected_rev} and~\cref{lemma:lb_rew} we have
\begin{align*}
    \rew(\distr)& =\expe\mleft[ \sum_{t=1}^\tau f_t(x_t)+T R\mleft(\frac{1}{T}\sum_{t=1}^T\cvec_t x_t\mright)\mright]\\
    &\ge \expe\mleft[\tau D(\bar\mu,\bar\lambdavec|\distr)- \sum_{t=1}^\tau \mleft(\xi_{1,t}(\mu_t)+\xi_{2,t}(\lambdavec_t) + R(\bar \xvec_t) \mright)+T R\mleft(\frac{1}{T}\sum_{t=1}^T\cvec_t x_t\mright)\mright].
\end{align*}
Then,
\begin{equation}\label{eq:opt_rew}
    \opt(\distr) - \rew(\distr) \le \expe\mleft[(T-\tau)\bar v + \sum_{t=1}^\tau \mleft(\xi_{1,t}(\mu_t)+\xi_{2,t}(\lambdavec_t) + R(\bar \xvec_t) \mright)-T R\mleft(\frac{1}{T}\sum_{t=1}^T\cvec_t x_t\mright) \mright].
\end{equation}

\textbf{Complementary slackness term $\xi_{1,t}$.} 
For each $t$, the gradient of $\xi_{1,t}$ is given by $\nabla_\mu\xi_{1,t}=\rho - p_t x_t$ and is bounded as follows $|\nabla_\mu\xi_{1,t}|\le\rho + \bar p$. Algorithm~\ref{alg:omd} applies online mirror descent to the sequence of functions $(\xi_{1,t},\xi_{2,t})_{t=1}^T$. Therefore, by holding the second term fixed, we have that, for any pair $(\mu,\lambdavec)\in\dom$, $\sum_{t=1}^\tau \xi_{1,t}(\mu_t)-\xi_{1,t}(\mu)\le \textsc{U}(\tau,(\mu,\lambdavec))\le \textsc{U}(T,(\mu,\lambdavec))$, where $\textsc{U}(\tau,(\mu,\lambdavec))$ is the regret guarantee for online mirror descent after $\tau$ iterations as defined in~\cref{thm:omd regret}, and the second inequality follows from the fact the regret upper bound is increasing in $t$.

\textbf{Complementary slackness term $\xi_{2,t}$.}
For each $t$, the gradient of $\xi_{2,t}$ is given by $\nabla_{\lambdavec}\xi_{2,t}=\bar\xvec_t - \cvec_t x_t$. Since $\bar\xvec_t,\cvec_tx_t\in\Delta_+^m$, the gradient is bounded by $\norm{\nabla_{\lambdavec}\xi_{2,t}}_{p,\ast}\le 2$. Therefore, analogously to the previous case, for any $(\mu,\lambdavec)\in\dom$, we have $\sum_{t=1}^\tau \xi_{2,t}(\lambdavec_t)-\xi_{2,t}(\lambdavec)\le \textsc{U}(T,(\mu,\lambdavec))$. 
Now, we focus on a particular choice of $\lambdavec$. Let
\begin{equation}\label{eq:lambda_hat}
    \hat\lambdavec \in \argmax_{\lambdavec\in\mathbb{R}^m}\mleft\{ R^\ast(-\lambdavec) - \langle\lambdavec, \frac{1}{T} \sum_{t=1}^T \cvec_t x_t\rangle\mright\}.
\end{equation}
By definition of $R^\ast$ we have that, for any $t$,
\[
R^\ast\mleft(-\lambdavec\mright)\defeq \max_{\bar\xvec\in\Delta_+^m} \mleft\{ R(\bar\xvec) + \langle \lambdavec, \bar\xvec\rangle \mright\}\ge R(\bar \xvec_t) + \langle  \lambdavec ,\bar\xvec_t\rangle. 
\]
Then, by summing over $t=1,\ldots,T$ we obtain 
\begin{equation}\label{eq:conj_lambda_hat1}
R^\ast\mleft(-\lambdavec\mright)\ge\frac{1}{T}\sum_{t=1}^T \mleft(R(\bar \xvec_t) + \langle  \lambdavec ,\bar\xvec_t\rangle\mright). 
\end{equation}
Now, if $R$ is concave and closed we have that $R=R^{\ast\ast}$ (see, e.g.,~\citet[Theorem 4.2.1]{borwein2010convex}). Then, by using this fact and~\cref{eq:lambda_hat},
\begin{align*}
    R\mleft(\frac{1}{T}\sum_{t=1}^T\cvec_tx_t\mright) & = R^{\ast\ast}\mleft(\frac{1}{T}\sum_{t=1}^T\cvec_tx_t\mright) = R^\ast\mleft(-\hat\lambdavec\mright)-\langle\hat\lambdavec, \frac{1}{T}\sum_{t=1}^T\cvec_tx_t\rangle.
\end{align*}
Then,
\begin{equation}\label{eq:conj_lambda_hat2}
R^\ast\mleft(-\hat\lambdavec\mright)=R\mleft(\frac{1}{T}\sum_{t=1}^T\cvec_tx_t\mright)+\langle\hat\lambdavec, \frac{1}{T}\sum_{t=1}^T\cvec_tx_t\rangle.
\end{equation}
By~\cref{eq:conj_lambda_hat1} and~\cref{eq:conj_lambda_hat2} we obtain
\[
R\mleft(\frac{1}{T}\sum_{t=1}^T\cvec_tx_t\mright)+\langle\hat\lambdavec, \frac{1}{T}\sum_{t=1}^T\cvec_tx_t\rangle\ge \frac{1}{T}\sum_{t=1}^T \mleft(R(\bar \xvec_t) + \langle \hat\lambdavec ,\bar\xvec_t\rangle\mright).
\]
The above inequality can be rewritten as
\begin{align}
\sum_{t=1}^T\xi_{2,t}\mleft(\hat\lambdavec\mright) & = \sum_{t=1}^T\langle\hat\lambdavec,\bar\xvec_t-\cvec_tx_t\rangle \nonumber\\
& \le T R\mleft(\frac{1}{T}\sum_{t=1}^T\cvec_tx_t\mright) - \sum_{t=1}^TR(\bar\xvec_t).
\label{eq:xi_lambda_hat}
\end{align}
Finally, by setting $\hat\lambdavec$ as specified in~\cref{eq:lambda_hat}, and for any $\mu\in\mathbb{R}_{\ge0}$, we can bound the complementary slackness term as follows
\begin{align*}
    \sum_{t=1}^\tau\xi_{2,t}(\lambdavec_t)& \le \sum_{t=1}^\tau\xi_{2,t}(\hat\lambdavec)+\textsc{U}(T,(\mu,\hat\lambdavec))\\
    & = \sum_{t=1}^T\xi_{2,t}(\hat\lambdavec)-\sum_{t=\tau+1}^T\xi_{2,t}(\hat\lambdavec)+\textsc{U}(T,(\mu,\hat\lambdavec))\\
    & \le T R\mleft(\frac{1}{T}\sum_{t=1}^T\cvec_tx_t\mright) - \sum_{t=1}^TR(\bar\xvec_t)-\sum_{t=\tau+1}^T\xi_{2,t}(\hat\lambdavec)+\textsc{U}(T,(\mu,\hat\lambdavec)),
\end{align*}
where the last inequality follows from~\cref{eq:xi_lambda_hat}.

\textbf{Final step.} By substituting in~\cref{eq:opt_rew} the above upper bounds to the complementary slackness terms we obtain that, for any $\mu\in\mathbb{R}_+$ and for $\hat\lambdavec$ as in~\cref{eq:lambda_hat},
\begin{align*}
    \opt(\distr)-\rew(\distr)&\le \expe\mleft[(T-\tau)\bar v+ \sum_{t=1}^\tau \xi_{1,t}(\mu)- \sum_{t=\tau+1}^T\mleft(R(\bar\xvec_t)+\xi_{2,t}(\hat\lambdavec)\mright) \mright]+2\textsc{U}(T,(\mu,\hat\lambdavec))\\
    &\le \expe\mleft[(T-\tau)\mleft(\bar v -\underline{r} + 2\|\hat\lambdavec\|_\infty \mright)+ \sum_{t=1}^\tau \xi_{1,t}(\mu) \mright]+2\textsc{U}(T,(\mu,\hat\lambdavec)).
\end{align*}
By definition of the stopping time $\tau$, $\sum_{t=1}^\tau p_tx_t+\bar p\ge T\rho$. By letting $C=\mleft(\bar v -\underline{r} + 2\|\hat\lambdavec\|_\infty \mright)$, and $\hat\mu=C/\rho$, which is well defined as $\hat\mu\ge0$, we obtain
\begin{align*}
\sum_{t=1}^\tau \xi_{1,t}(\hat\mu) & = \sum_{t=1}^\tau \hat\mu(\rho-p_tx_t)\\
&\le \hat\mu(\tau\rho - T\rho + \bar p)\\
& =\frac{C}{\rho}\bar p - (T-\tau)C.
\end{align*}
Then,
\[
\opt(\distr)-\rew(\distr)\le \frac{C}{\rho}\bar p + 2\textsc{U}(T,(\mu,\lambdavec)).
\]
Finally, we observe that $\|\hat\lambdavec\|_\infty\le L$ because $R$ is $L$-Lipschitz continuous with respect to $\|\cdot\|_1$ and $\hat\lambdavec\in\partial R(\bar x_t)$ by~\cref{eq:lambda_hat}. Therefore, by substituting the regret guarantees for OMD as specified in~\cref{thm:omd regret}, we have 
\[
\opt(\distr)-\rew(\distr)\le C_1 + \frac{G^2\eta}{\sigma} T +\frac{1}{\eta}C_3,
\]
where $C_1=(\bar v -\underline{r} + 2L)\bar p/\rho$, $G=\max\{\rho+\bar p,2\}$, and 
\[
C_3=2\sup\mleft\{B_\psi((\mu,\lambdavec),(\mu_1,\lambdavec_1)):(\mu,\lambdavec)\in\dom,\norm{\lambdavec}\le L\mright\}.
\] 
This concludes the proof.
\end{proof}

\section{Additional experimental results}\label{sec:exp appendix}

In order to further assess the performance of Algorithm~\ref{alg:omd} we consider the following setting. We start from the \textsc{B2} scenario described in~\cref{sec: exp eval}, where we have two possible categories per item corresponding to different mobile device types. 
Then, we artificially create an unbalance in the price distributions for the two categories, making one category slightly cheaper than the other. 
~\cref{fig:exp cont} reports the comparison between Algorithm~\ref{alg:omd} and \textsc{AP}. 
\cref{fig:exp cont} (Right) show that the total variation distance between the realized distribution of impressions $\bar\xvec$ and the target distribution $\hat\xvec$ is close to $0.5$. Since $\hat\xvec=(.5,.5)$ this means that \textsc{AP} is forcing the bidder to win \emph{only} impressions for \emph{cheap} requests. 
On the other hand, by employing Algorithm~\ref{alg:omd} we can effectively steer the realized distribution of impression towards the target without compromising the quality of the solution in terms of rewards. 

\begin{figure}
\begin{minipage}{.45\textwidth}
\includegraphics[scale=0.3]{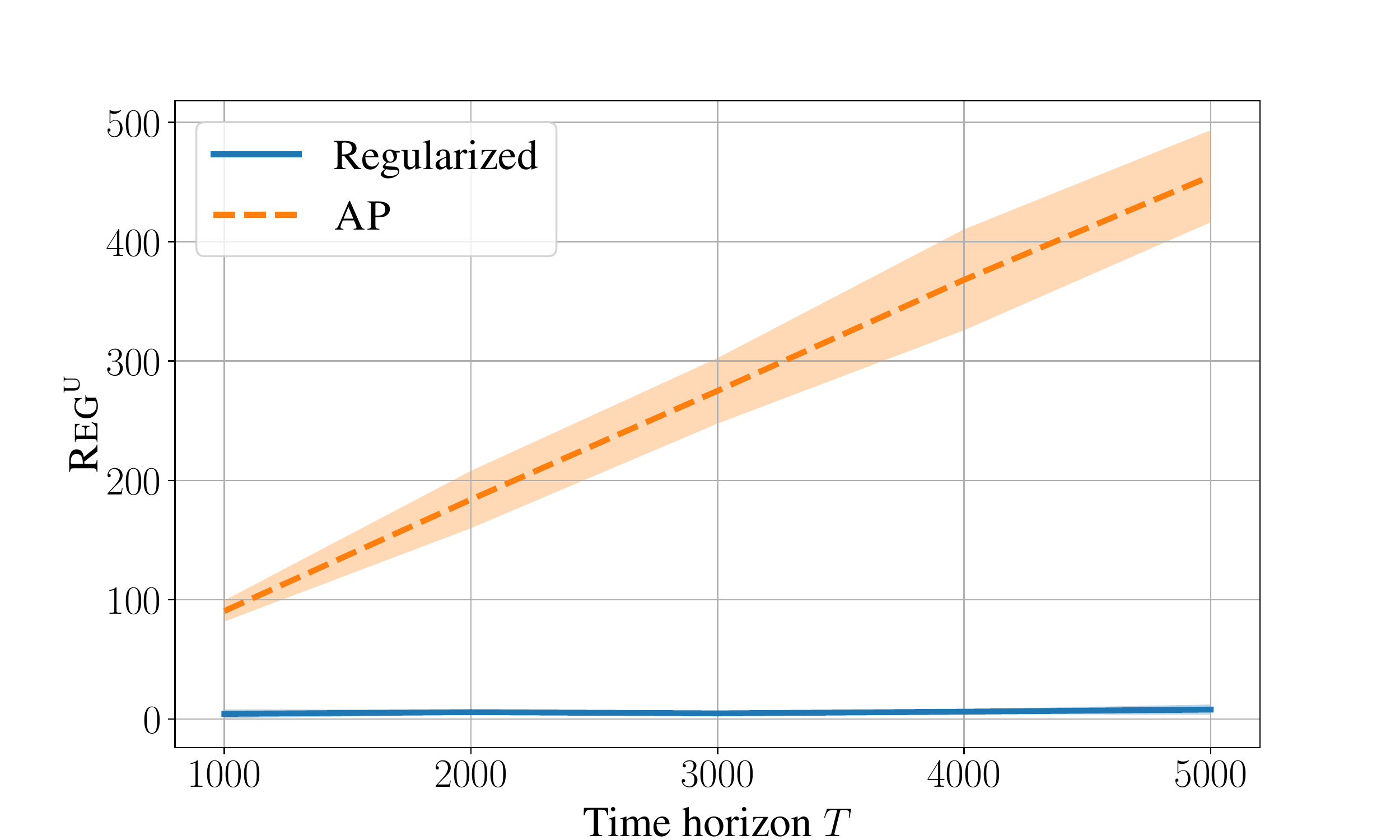}
\end{minipage}
\hspace{.5cm}
\begin{minipage}{.45\textwidth}
\includegraphics[scale=0.3]{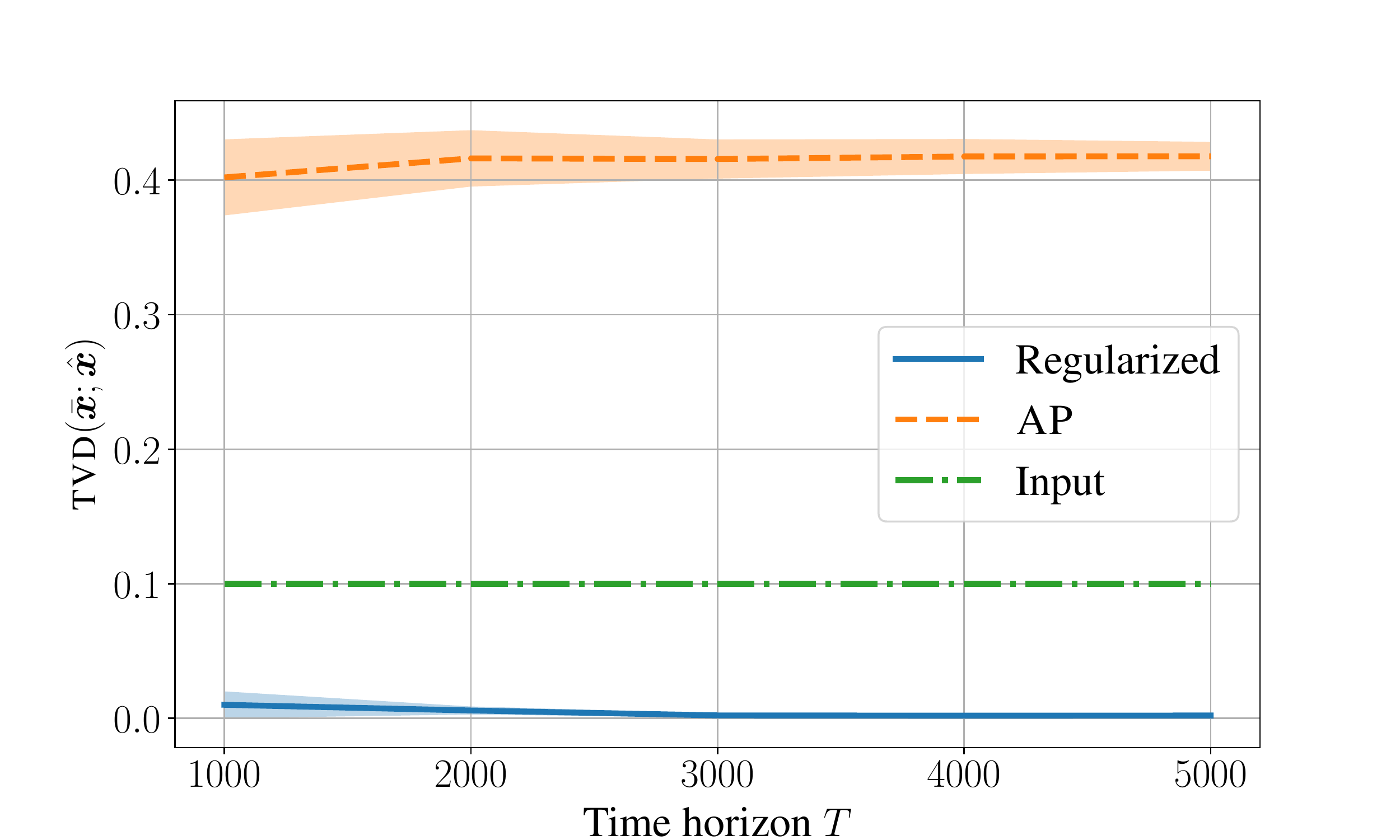}
\end{minipage}
\caption{\footnotesize \textbf{(Left)}: Empirical upper bound on the regret (i.e., $\reg^\textsc{u}$ as defined in~\cref{eq:regret ub}) for breakdown \textsc{B2} with prices unbalance. \textbf{(Right)}: Total variation distance from the target distribution $\hat\xvec$ for \textsc{B2} with prices unbalance.}
\label{fig:exp cont}
\end{figure}

\end{document}